\title{When Lipschitz Walks Your Dog:\newline Algorithm Engineering of the\newline Discrete Fréchet Distance under Translation}
\author{Karl Bringmann}{Saarland University and Max Planck Insitute for Informatics,  Saarland Informatics Campus, Saarbrücken, Germany}{kbringma@mpi-inf.mpg.de}{}{This work is part of the project TIPEA that has received funding from the European Research Council (ERC) under the European Unions Horizon 2020 research and innovation programme (grant agreement No. 850979).}
\author{Marvin Künnemann}{Max Planck Insitute for Informatics,  Saarland Informatics Campus, Saarbrücken, Germany}{marvin@mpi-inf.mpg.de}{}{}
\author{André Nusser}{Saarbrücken Graduate School of Computer Science and Max Planck Insitute for Informatics,  Saarland Informatics Campus, Saarbrücken, Germany}{anusser@mpi-inf.mpg.de}{}{}
\titlerunning{Algorithm Engineering of the Discrete Fréchet Distance under Translation} 
\authorrunning{K. Bringmann, M. Künnemann, A. Nusser} 
\keywords{Fréchet Distance,
Computational Geometry,
Continuous Optimization,
Algorithm Engineering} 
\newtheorem{observation}{Observation}
\newcommand{\ANDRE}[1]{\noindent\textcolor{red}{\textbf{TODO (André):} #1}}
\newcommand{\MARVIN}[1]{\noindent\textcolor{green}{\textbf{TODO (Marvin):} #1}}
\newcommand{\KARL}[1]{\noindent\textcolor{blue}{\textbf{TODO (Karl):} #1}}
\newcommand{\andre}[1]{\ANDRE{#1}}
\newcommand{\marvin}[1]{\MARVIN{#1}}
\newcommand{\karl}[1]{\KARL{#1}}
\renewcommand{\andre}[1]{}
\renewcommand{\marvin}[1]{}
\renewcommand{\karl}[1]{}
\renewcommand{\ANDRE}[1]{}
\renewcommand{\MARVIN}[1]{}
\renewcommand{\KARL}[1]{}
\newcolumntype{L}{>{\raggedright\arraybackslash}X}
\newcommand{\norm}[1]{\left\lVert #1 \right\rVert}
\newcommand{\Oh}{\ensuremath{\mathcal{O}}}
\newcommand{\dF}{d_{F}}
\newcommand{\sigspatial}{\textsc{Sigspatial}\xspace}
\newcommand{\characters}{\textsc{Characters}\xspace}
\newcommand{\taustart}{\tau_{\mathrm{start}}}
\newcommand{\tauend}{\tau_{\mathrm{end}}}
\newcommand{\deltastart}{\delta_{\mathrm{start}}}
\newcommand{\deltaend}{\delta_{\mathrm{end}}}
\newcommand{\tauopt}{\tau^*}
\newcommand{\deltaopt}{\delta^*}
\newcommand{\deltaLB}{\delta_{\mathrm{LB}}}
\newcommand{\deltaUB}{\delta_{\mathrm{UB}}}
\newcommand{\dtransF}{d_{\text{trans-}F}}
\newcommand{\eps}{\epsilon}
\newcommand{\lmf}{LMF\xspace}
\newcommand{\Nsamples}{N_{\mathrm{samples}}}
\newcommand{\sizeparameter}{\gamma_{\mathrm{size}}}
\newcommand{\depthparameter}{\gamma_{\mathrm{depth}}}
\newcommand{\arr}{\mathcal{A}}
\newcommand{\R}{\mathbb{R}}
\begin{document}

\maketitle

\begin{abstract}
Consider the natural question of how to measure the similarity of curves in the plane by a quantity that is \emph{invariant under translations} of the curves. Such a measure is justified whenever we aim to quantify the similarity of the curves' \emph{shapes} rather than their positioning in the plane, e.g., to compare the similarity of handwritten characters.
Perhaps the most natural such notion is the (discrete) \emph{Fréchet distance under translation}. Unfortunately, the algorithmic literature on this problem yields a very pessimistic view: On polygonal curves with $n$ vertices, the fastest algorithm runs in time $\Oh(n^{4.667})$ and cannot be improved below $n^{4-o(1)}$ unless the Strong Exponential Time Hypothesis fails. Can we still obtain an implementation that is efficient on realistic datasets? 

Spurred by the surprising performance of recent implementations for the Fréchet distance, we perform algorithm engineering for the Fréchet distance under translation. Our solution combines fast, but inexact tools from continuous optimization (specifically, branch-and-bound algorithms for global Lipschitz optimization) with exact, but expensive algorithms from computational geometry (specifically, problem-specific algorithms based on an arrangement construction).
We combine these two ingredients to obtain an \emph{exact decision} algorithm for the Fréchet distance under translation. 
For the related task of computing the distance \emph{value} up to a desired precision, we engineer and compare different methods. On a benchmark set involving handwritten characters and route trajectories, our implementation answers a typical query for either task in the range of a few milliseconds up to a second on standard desktop hardware.   

We believe that our implementation will enable, for the first time, the use of the Fréchet distance under translation in applications, whereas previous algorithmic approaches would have been computationally infeasible. 
Furthermore, we hope that our combination of continuous optimization and computational geometry will inspire similar approaches for further algorithmic questions.

\end{abstract}


\newpage

\section{Introduction}

Consider the following natural task: Given two handwritings of (the same or different) characters, represented as polygonal curves $\pi, \sigma$ in the plane, determine how similar they are.
To measure the similarity of two such curves, several distance notions could be used, where the most popular measure in computational geometry is given by the \emph{Fréchet distance} $\dF(\pi, \sigma)$:  Intuitively, we imagine a dog walking on $\pi$ and its owner walking on $\sigma$, and define $\dF(\pi, \sigma)$ as the \emph{shortest leash length} required to connect the dog to its owner while both walk along their curves (only forward, but at arbitrarily and independently variable speeds). In this paper, we focus on the \emph{discrete} version, in which dog and owner do not continuously walk along the curves, but jump from vertex to vertex.\footnote{We give a precise definition in Section~\ref{sec:prelim}.} As a fundamental curve similarity notion that takes into account the \emph{sequence} of the points of the curves (rather than simply the set of points, as in the simpler notion of the Hausdorff distance), the discrete Fréchet distance and variants have received considerable attention from the computational geometry community, see, e.g.~\cite{AltG95, EiterM94, BuchinBW09, DriemelHPW12, AgarwalBAKS14, Bringmann14, BuchinBMM17,BuchinOS19}. While the fastest known algorithms take time $n^{2\pm o(1)}$ on polygonal curves with at most $n$ vertices~\cite{AltG95, EiterM94, AgarwalBAKS14,BuchinBMM17}---which is best possible under the Strong Exponential Time Hypothesis~\cite{Bringmann14}---a recent line of research~\cite{sigspatial1,sigspatial2,sigspatial3,BringmannKN19socg} gives fast implementations for practical input curves.

In the setting of handwritten characters, one would expect our notion of similarity to be \emph{invariant under translations} of the curves; after all, translating one character in the plane while fixing the position of the other should not affect their similarity. In this sense, the original Fréchet distance seems inadequate, as it does not satisfy translation invariance. However, we may canonically define a translation-invariant adaptation as the minimum Fréchet distance between $\pi$ and any translation of $\sigma$, yielding the \emph{Fréchet distance under translation}. Note that beyond computing the similarity of handwritten characters, this measure is generally applicable whenever our intuitive notion of similarity is not affected by translations, such as recognition of movement patterns\footnote{One may argue that the similarity of movement patterns also depends on the speed/velocity of the motion. In principle, we can also incorporate such information into any Fréchet-distance-based measure by introducing an additional dimension.}. In some settings, we would expect our notion to additionally be scaling- or rotation-invariant; however, this is beyond the scope of this paper, as already the Fréchet distance under translation presents previously unresolved challenges.

Can we compute the Fréchet distance under translation quickly? 
The existing theoretical work yields a rather pessimistic outlook: For the discrete Fréchet distance under translation in the plane, the currently fastest algorithm runs in time $\Oh(n^{4.667})$, and any algorithm requires time $n^{4-o(1)}$ under the Strong Exponential Time Hypothesis~\cite{BringmannKN19soda}.  These high polynomial bounds appear prohibitive in practice, and have likely impeded algorithmic uses of this similarity measure. (For the continuous analogue, the situation appears even worse, as the fastest algorithm has a significantly higher worst-case bound of $\Oh(n^8\log n)$; we thus solely consider the discrete version in this work.) Given the surprising performance of recent Fréchet distance implementations on realistic curves~\cite{giscup17, BringmannKN19socg}, can we still hope for faster algorithms on realistic inputs also for its translation-invariant version? 

\subparagraph*{Our problem.} Towards making the Fréchet distance under translation applicable for practical applications, we engineer a fast implementation and analyze it empirically on realistic input sets. 
%
%
Perhaps surprisingly, our fastest solution for the problem combines inexact \emph{continuous optimization} techniques with an exact, but expensive problem-specific approach from computational geometry to obtain an \emph{exact decision} algorithm. We discuss our approach in Section~\ref{sec:approach} and present the details of our decision algorithm in Section~\ref{sec:decider}. 
We develop our approach also for the related, but different task to compute the distance value up to a given precision in Section~\ref{sec:valueComputation}, and evaluate our solutions for both settings in comparison to baseline approaches in Section~\ref{sec:experiments}.

\subparagraph*{Further related work.} 
Variations of the distance measure studied in this paper arise by choosing (1) the discrete or continuous Fréchet distance, (2) the dimension $d$ of the ambient Euclidean space, 
and (3) a class of transformations, e.g., translations, rotations, scaling, or arbitrary linear transformations. 
A detailed treatment of algorithms for this class of distance measures can be found in~\cite{wenk2002phd}.
The earliest example of a problem in this class is the continuous Fréchet distance under translations in dimension $d=2$, which was introduced by Alt et al.~\cite{alt2001matching} together with an $\Oh(n^8 \log n)$-time algorithm.

In this paper we focus on the discrete Fréchet distance under translation in the plane. This problem was first studied by Mosig and Clausen~\cite{MOSIG2005113}, who gave an $\Oh(n^4)$ algorithm for approximating the discrete Fréchet distance under rigid motions. Subsequently, Jiang et al.~\cite{JiangXZ08} presented an $\Oh(n^6 \log n)$-time algorithm for the exact Fréchet distance under translation.
Their running time was improved by Ben Avraham et al.\ to $\Oh(n^5 \log n)$~\cite{BenAvrahamKS15}, and then by Bringmann et al.\ to $\Oh(n^{4.667})$~\cite{BringmannKN19soda}. 
A conditional lower bound of $n^{4-o(1)}$ can be found in~\cite{BringmannKN19soda}.

Algorithm engineering efforts for the Fréchet distance were initiated by the SIGSPATIAL GIS Cup 2017~\cite{giscup17}, 
where the task was to implement a nearest neighbor data structure for curves under the Fréchet distance; 
see~\cite{sigspatial1,sigspatial2,sigspatial3} for the top three submissions. 
The currently fastest implementation of the Fréchet distance is due to Bringmann et al.~\cite{BringmannKN19socg}. 
Further recent directions of Fréchet-related algorithm engineering include k-means clustering of trajectories~\cite{BuchinDLN19}
and locality sensitive hashing of trajectories~\cite{CeccarelloDS19}.

%

\section{Preliminaries}
\label{sec:prelim}

Throughout the paper, we consider the Euclidean plane and denote the Euclidean norm by~$\norm{\cdot}$.
A \emph{polygonal curve} $\pi$ is a sequence $\pi = (\pi_1,\dots, \pi_n)$ of vertices $\pi_i \in \R^2$. For any $\tau \in \R^2$, we write $\pi + \tau$ for the translated curve $(\pi_1 + \tau, \dots, \pi_n + \tau)$. 

For any curves $\pi = (\pi_1, \dots, \pi_n), \sigma= (\sigma_1,\dots, \sigma_m)$, we define their \emph{discrete Fréchet distance} as follows. A \emph{traversal} is a sequence $T = (\, (p_1,s_1), \dots, (p_t, s_t)\, )$ of pairs $(p_i,s_i) \in [n]\times[m]$ such that $(p_1,s_1) = (1,1)$, $(p_t,s_t) = (n,m)$ and $(p_{i+1},s_{i+1})\in \{(p_i+1, s_i), (p_i, s_i+1), (p_i+1, s_i+1)\}$ for all $1\le i < t$. The width of a traversal is $\max_{i=1, ..., |T|} \norm{\pi_{p_i}-\sigma_{s_i}}$. The discrete Fréchet distance is then defined as the smallest width over all traversals, i.e., 
\[ \dF(\pi, \sigma) \coloneqq \min_{\text{traversal } T} \max_{i=1, ..., |T|} \norm{\pi_{p_i} - \sigma_{s_i}}.\]
As we only consider the discrete Fréchet distance in this paper, we drop ``discrete'' in the remainder. To avoid confusion, we also refer to it as the \emph{fixed-translation} Fréchet distance.

As the canonically translation-invariant variant of the discrete Fréchet distance, we define the \emph{discrete Fréchet distance under translation} as $\dtransF(\pi, \sigma) \coloneqq \min_{\tau\in \R^2} \dF(\pi, \sigma+\tau)$. 
We typically view the problem as a two-dimensional optimization problem with objective function $f(\tau) \coloneqq \dF(\pi, \sigma + \tau)$. Specifically, we consider the task to decide \emph{$\min_{\tau \in \R^2} f(\tau) \le \delta$?} (\emph{exact decider}) or to return a value in the range $[(1-\eps)\min_{\tau \in \R^2} f(\tau), (1+\eps) \min_{\tau \in \R^2} f(\tau)]$ (\emph{approximate value computation}, multiplicative version). In fact, for implementation reasons (see Section~\ref{sec:valueComputation} for the details),
  our implementation returns a value in $[\min_{\tau \in \R^2} f(\tau) - \eps, \min_{\tau \in \R^2} f(\tau) + \eps]$ (\emph{approximate value computation}, additive version) using a straightforward adaptation of our approach.

Apart from a black-box Fréchet oracle answering decision queries \emph{$\dF(\pi, \sigma + \tau)  \le \delta$?}, our algorithms only exploit the following simple properties:

\begin{observation}[Lipschitz property]\label{obs:Lipschitz}
The objective function $f$ is 1-Lipschitz, i.e., $|f(\tau) - f(\tau + \tau')| \le \norm{\tau'}$.
\end{observation}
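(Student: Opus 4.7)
The plan is to reduce the statement to a single application of the triangle inequality on the per-vertex-pair distances that underlie the Fréchet width, combined with the fact that both the \textsc{max} and \textsc{min} operators preserve additive shifts. There is no real obstacle here; the argument is elementary and the main thing to do cleanly is to quantify over traversals in the right order.

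Concretely, I would first fix an arbitrary traversal $T = ((p_1, s_1), \dots, (p_t, s_t))$ of $\pi$ and $\sigma$ and compare its widths with respect to the translations $\tau$ and $\tau + \tau'$. For each index $i$, the triangle inequality gives
\[ \norm{\pi_{p_i} - (\sigma_{s_i} + \tau + \tau')} \le \norm{\pi_{p_i} - (\sigma_{s_i} + \tau)} + \norm{\tau'}. \]
Taking the maximum over $i$ on both sides (and noting that the additive constant $\norm{\tau'}$ passes through the max unchanged) shows that the width of $T$ with respect to $\pi$ and $\sigma + \tau + \tau'$ exceeds its width with respect to $\pi$ and $\sigma + \tau$ by at most $\norm{\tau'}$.

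Now I would take the minimum over traversals $T$, which yields $f(\tau + \tau') \le f(\tau) + \norm{\tau'}$. Swapping the roles of $\tau$ and $\tau + \tau'$---equivalently, replacing $\tau'$ by $-\tau'$ and using $\norm{-\tau'} = \norm{\tau'}$---gives the symmetric inequality $f(\tau) \le f(\tau + \tau') + \norm{\tau'}$. Combining both directions yields $|f(\tau) - f(\tau + \tau')| \le \norm{\tau'}$, as claimed.
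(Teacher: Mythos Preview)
Your argument is correct and essentially the same as the paper's: both apply the (reverse) triangle inequality to each pair $\norm{\pi_{p_i}-(\sigma_{s_i}+\tau)}$, then push the bound through the $\max$ over a traversal and the $\min$ over all traversals. The only cosmetic difference is that the paper states the two-sided bound $\bigl|\norm{\pi_i-(\sigma_j+\tau+\tau')}-\norm{\pi_i-(\sigma_j+\tau)}\bigr|\le\norm{\tau'}$ directly, whereas you prove one direction and then symmetrize.
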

\begin{proof}
Note that for any $\pi_i,\sigma_j,\tau,\tau' \in \R^2$, we have
\[
\left|\norm{\pi_i-(\sigma_j+\tau+\tau')} - \norm{\pi_i-(\sigma_j+\tau)}\right| \le \norm{\tau'}
\]
by triangle inequality. Thus, the widths of any traversal $T$ for $\pi, \sigma+\tau$ and $\pi, \sigma+\tau+\tau'$ differ by at most $\norm{\tau'}$, which immediately yields the observation.
\end{proof}

We obtain a simple $2$-approximation of the Fréchet distance under translation as follows.

\begin{observation}
Let $\taustart \coloneqq \pi_1 - \sigma_1$ be the translation of $\sigma$ that aligns the first points of $\pi$ and $\sigma$. Then $\dF(\pi, \sigma + \taustart) \le 2\cdot \dtransF(\pi, \sigma)$.

Analogously, for $\tauend \coloneqq \pi_n - \sigma_m$, we have $\dF(\pi, \sigma + \tauend) \le 2\cdot \dtransF(\pi, \sigma)$. 
\end{observation}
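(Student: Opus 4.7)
The plan is to exploit the optimal translation $\tauopt$ together with the Lipschitz property (Observation~\ref{obs:Lipschitz}) to bound the suboptimality of the candidate translation $\taustart$. Let me write $\deltaopt := \dtransF(\pi, \sigma) = f(\tauopt)$, where $f(\tau) = \dF(\pi, \sigma + \tau)$.

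First I would observe that every traversal of $\pi$ and $\sigma + \tauopt$ must begin with the pair $(1,1)$, so the width of any traversal is at least $\norm{\pi_1 - (\sigma_1 + \tauopt)}$. In particular, this is a lower bound on $f(\tauopt) = \deltaopt$. Rewriting, this says $\norm{\tauopt - (\pi_1 - \sigma_1)} = \norm{\tauopt - \taustart} \le \deltaopt$, i.e., the starting-point-aligning translation $\taustart$ is within distance $\deltaopt$ of the optimal translation.

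Next I would plug this into the Lipschitz bound of Observation~\ref{obs:Lipschitz}: with $\tau = \tauopt$ and $\tau' = \taustart - \tauopt$, we get
\[
f(\taustart) \le f(\tauopt) + \norm{\taustart - \tauopt} \le \deltaopt + \deltaopt = 2\,\dtransF(\pi, \sigma),
\]
which is the desired bound. The argument for $\tauend$ is entirely symmetric: every traversal ends with the pair $(n,m)$, so $\norm{\pi_n - (\sigma_m + \tauopt)} \le \deltaopt$, giving $\norm{\tauopt - \tauend} \le \deltaopt$, and the Lipschitz property yields $f(\tauend) \le 2\,\dtransF(\pi, \sigma)$.

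There is no real obstacle here; the only mildly subtle point is recognizing that the endpoint constraints of a traversal force $\pi_1$ and $\sigma_1 + \tauopt$ (respectively $\pi_n$ and $\sigma_m + \tauopt$) to be within the optimal Fréchet distance of each other, so that the heuristic translations $\taustart,\tauend$ are automatically close to $\tauopt$ in the ambient $\R^2$. Once that observation is in place, the 1-Lipschitz property of $f$ does all the remaining work.
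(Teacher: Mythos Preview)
Your proof is correct and follows essentially the same approach as the paper: bound $\norm{\taustart-\tauopt}$ by $\deltaopt$ via the endpoint constraint of traversals, then apply the 1-Lipschitz property of $f$ from Observation~\ref{obs:Lipschitz}. If anything, you are slightly more explicit than the paper in justifying why $\norm{\pi_1 - (\sigma_1 + \tauopt)} \le \deltaopt$.
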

\begin{proof}
Let $\deltaopt \coloneqq \dtransF(\pi, \sigma)$ and let $\tauopt$ be such that $\dF(\pi, \sigma + \tauopt) = \deltaopt$, which implies in particular that $\norm{\pi_1 - (\sigma_1 + \tauopt)} \le \deltaopt$. Thus, $\norm{\taustart-\tauopt} = \norm{\pi_1 - (\sigma_1 + \tauopt)} \le \deltaopt$. Thus by Observation~\ref{obs:Lipschitz}, we obtain $\dF(\pi, \sigma + \taustart) \le \dF(\pi, \sigma + \tauopt) + \deltaopt = 2\deltaopt$. 
\end{proof}

Note that the above observation gives a formal guarantee of a simple heuristic: translate the curves such that the start points match, and compute the corresponding fixed-translation Fr\'echet distance. Unfortunately, this worst-case guarantee is tight\footnote{To see this, take any segment in the plane and let $\pi$ traverse it in one direction, and $\sigma$ in the other. Then the heuristic would return as estimate two times the segment length (the distance of the translated end points), while the optimal translation aligns the segments and achieves the segment length as Fr\'echet distance.} -- a correspondingly large discrepancy is also observed on our data sets.


\section{Our Approach: Lipschitz meets Fréchet}
\label{sec:approach}

To obtain a fast exact decider, we approach the problem from two different angles: First, we review previous problem-specific approaches to the Fréchet distance under translation, all relying on the construction of an arrangement of circles as an essential tool from computational geometry. Second, we cast the problem into the framework of global Lipschitz optimization with its rich literature on fast, numerical solutions. In isolation, both approaches are inadequate to obtain a fast, exact decider (as the arrangement can be prohibitively large even for realistic data sets, and black-box Lipschitz optimization methods cannot return an exact optimum).  We then describe how to combine both approaches to obtain a fast implementation of an exact decider for the discrete Fréchet distance under translation in the plane. We evaluate our approach, including comparisons to (typically computationally infeasible) baseline approaches, on a data set that we craft from sets of handwritten character and (synthetic) GPS trajectories used in the ACM SIGSPATIAL GIS Cup 2017~\cite{characters_dataset, sigspatial_dataset}. We believe that our approach will inspire similar combinations of fast, inexact methods from continuous optimization with expensive, but exact approaches from computational geometry also in other contexts.

\subsection{View I: Arrangement-based Algorithms} \label{sec:arrView}

Previous algorithms for the Fréchet distance under translation in the plane work as follows. Given two polygonal curves $\pi, \sigma$ and a decision distance $\delta$, consider the set of circles
\[
	\mathcal{C} \coloneqq \{ C_\delta(\pi_i - \sigma_j) \mid \pi_i \in \pi,\; \sigma_j \in \sigma \},
\]
where $C_r(p)$ denotes the circle of radius $r \in \mathbb{R}$ around $p \in \mathbb{R}^2$.
Define the arrangement $\mathcal{A}_\delta$ as the partition of $\mathbb{R}^2$ induced by $\mathcal{C}$. 
The decision of $\dF(\pi, \sigma+\tau) \leq \delta$ is then uniform among all $\tau \in \mathbb{R}$ in the same face of $\mathcal{A}_\delta$ (for a detailed explanation, we refer to \cite[Section~3]{BenAvrahamKS15} or~\cite{BringmannKN19soda}). 
Thus, it suffices to check, for each face $f$ of $\arr_\delta$, an arbitrarily chosen translation $\tau_f \in f$. Specifically, the Fréchet distance under translation is bounded by $\delta$ if and only if there is some face $f$ of $\arr_\delta$ such that $\dF(\pi, \sigma + \tau_f) \le \delta$. 
Since the arrangement $\arr_\delta$ has size $\Oh(n^4)$ and can be constructed in time $\Oh(n^4)$~\cite{JiangXZ08}, using the standard $\Oh(n^2)$-time algorithm for the fixed-translation Fréchet distance~\cite{EiterM94, AltG95} to decide $\dF(\pi, \sigma + \tau_f) \le \delta$ for each face $f$, we immediately arrive at an $\Oh(n^6)$-time algorithm.

Subsequent improvements~\cite{BenAvrahamKS15,BringmannKN19soda} speed up the decision of $\dF(\pi, \sigma + \tau_f)\le \delta$ for all faces $f$ by choosing an appropriate ordering of the translations $\tau_f$ and designing data structures that avoid recomputing some information for ``similar'' translations, leading to an $\Oh(n^{4.667})$-time algorithm. Still, these works rely on computing the arrangement $\arr_\delta$ of worst-case size $\Theta(n^4)$, and a conditional lower bound indeed rules out $\Oh(n^{4-\epsilon})$-time algorithms~\cite{BringmannKN19soda}.

\subparagraph*{Drawback: The arrangement size bottleneck.}
Despite the worst-case arrangement size of~$\Theta(n^4)$ and the conditional lower bound in~\cite{BringmannKN19soda}, which indeed constructs such large arrangements, one might hope that realistic instances often have much smaller arrangements. If so, a combination with a practical implementation of the fixed-translation Fréchet distance could already give an algorithm with reasonable running time.
Unfortunately, this is not the case: our experiments in this paper exhibit typical arrangement sizes between $10^6$ to $10^8$ for curves of length $n \approx 200$, see Figure \ref{fig:decider-bbcalls} in Section~\ref{sec:experiments}. Also see Figure~\ref{fig:arrangement} which illustrates a large arrangement already on curves with 15 vertices, subsampled from our benchmark sets of realistic curves.

This renders a purely arrangement-based approach infeasible:
As existing implementations for the Fréchet distance typically answer queries within few microseconds, we would expect an average decision time between a few seconds and several minutes already for a single decision query for the Fréchet distance under translation. Thus, a reasonable approximation of the distance value via binary search would take between a minute and over an hour.

\begin{figure}
\begin{center}
	\includegraphics[width=.4\textwidth]{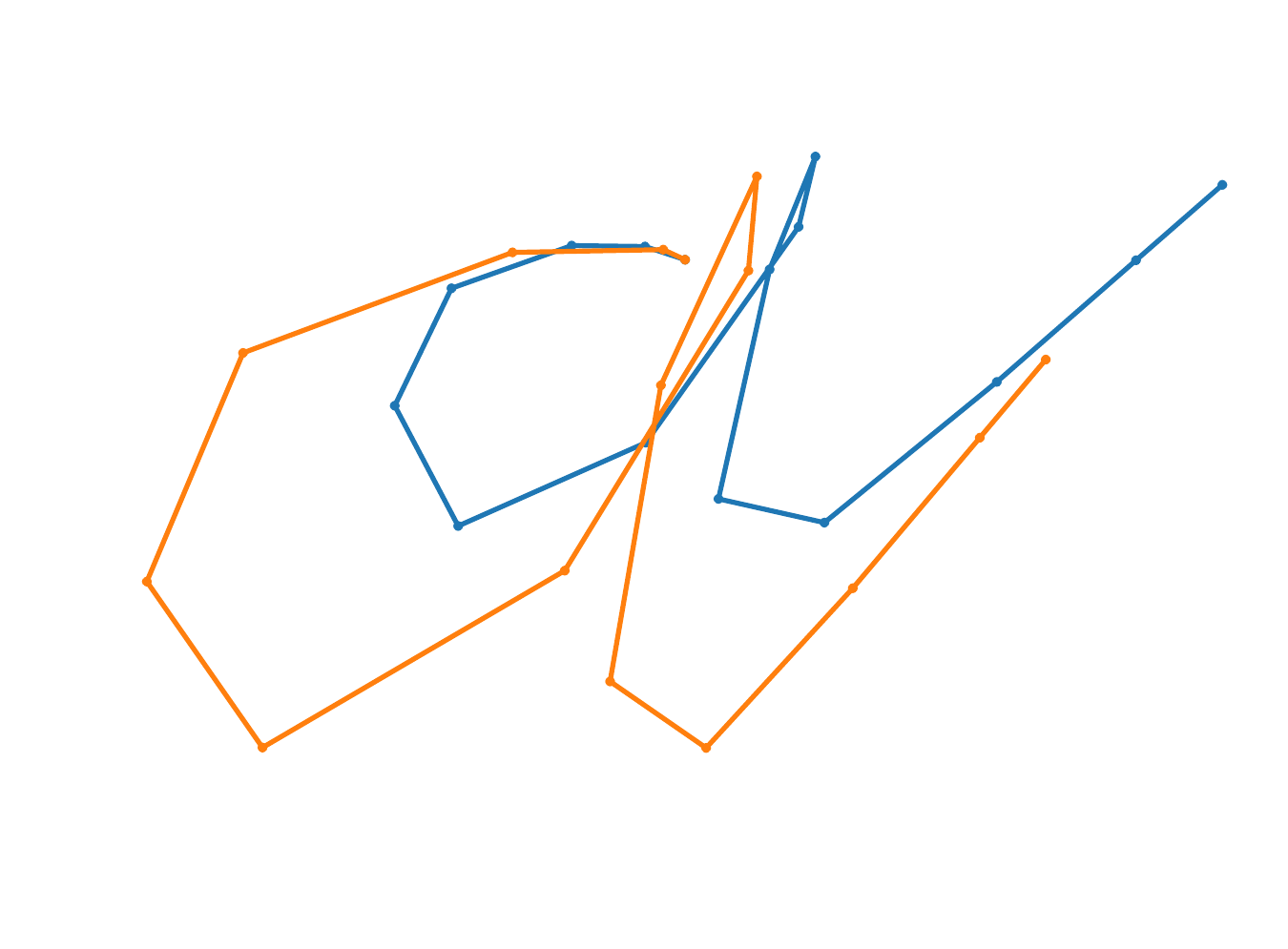}
	\hspace{1cm}
	\includegraphics[width=.4\textwidth]{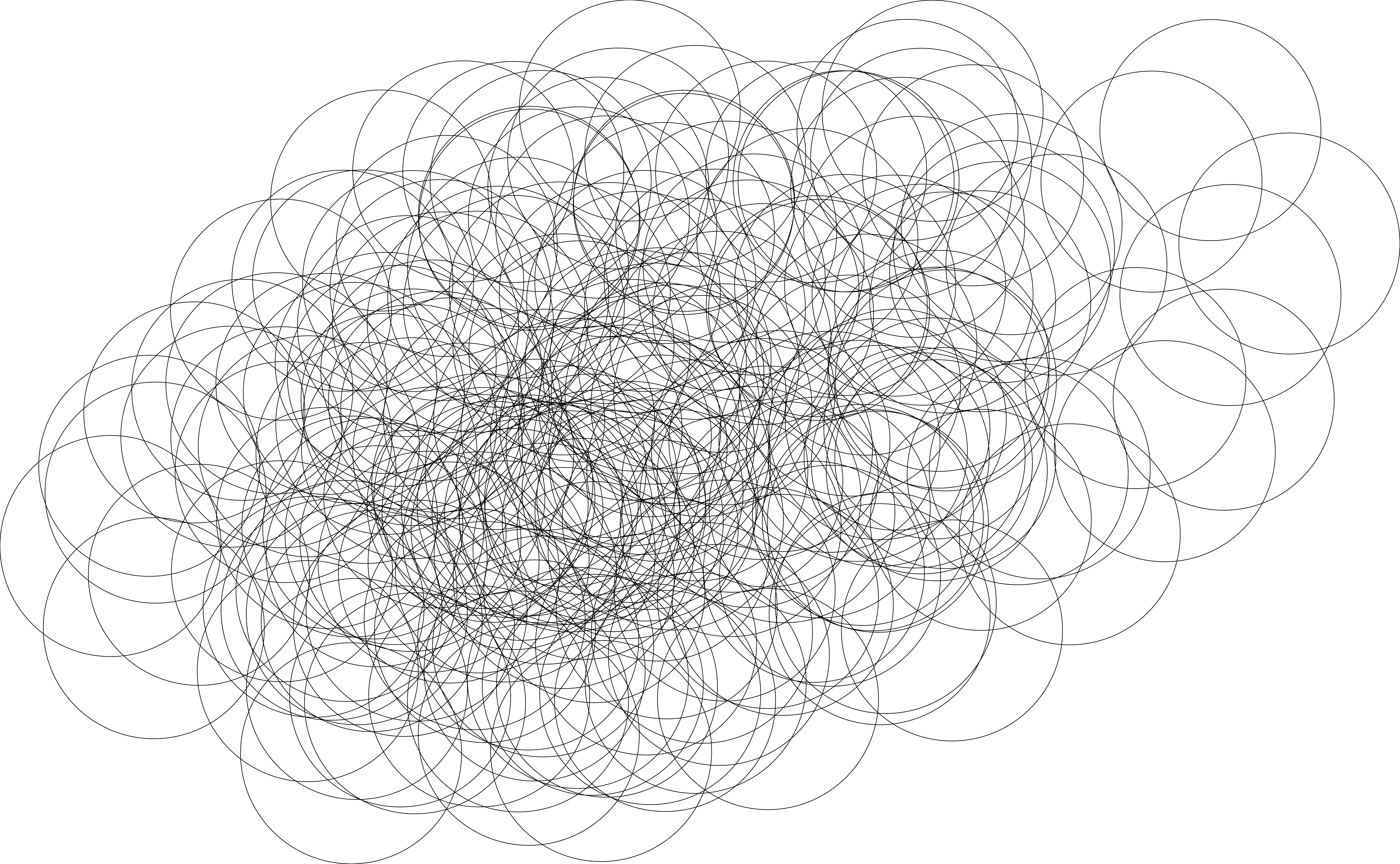}
\end{center}
\caption{Example curves $\pi,\sigma$ (left) together with their arrangement $\arr_\delta$ (right), $\delta = \dtransF(\pi, \sigma)$.}
\label{fig:arrangement}
\end{figure}

\subsection{View II: A Global Lipschitz Optimization problem}
\label{sec:Lipschitzview}

A second view on the Fréchet distance under translation results from a simple observation: For any polygonal curves $\pi,\sigma$ and any translation $\tau\in \R^2$, we have $|\dF(\pi, \sigma + \tau)- \dF(\pi, \sigma)| \le \|\tau\|_2$, see Section~\ref{sec:prelim}. As a consequence, the Fréchet distance under translation is the minimum of a function $f(\tau) \coloneqq \dF(\pi, \sigma + \tau)$ that is 1-Lipschitz (i.e., $|f(x)-f(x+y)| \le \|y\|_2$ for all $x,y$). This suggests to study the problem also from the viewpoint of the generic algorithms developed for optimizing Lipschitz functions by the continuous optimization community.

Following the terminology of \cite{HansenJ95chapter}, in an \emph{unconstrained bivariate global Lipschitz optimization problem}, we are given an objective function $f: \R^2 \to \R$ that is 1-Lipschitz, and the aim is to minimize $f(x)$ over $x\in B\coloneqq [a_1,b_1]\times [a_2,b_2]$; we can access $f$ only by evaluating it on (as few as possible) points $x\in B$. Note that in this abstract setting, we cannot optimize $f$ exactly, so we are additionally given an error parameter $\eps>0$ and the precise task is to find a point $x \in B$ such that $f(x) \le \min_{z \in B} f(z) + \eps$. 

Global Lipschitz optimization techniques have been studied from an algorithmic perspective for at least half a century~\cite{Piyavskii72}. This suggest to explore the use of the fast algorithms developed in this context to obtain at least an \emph{approximate} decider for the discrete Fréchet distance under translation. Indeed, our problem fits into the above framework, if we take the following considerations into account:
\begin{enumerate}[(1)]
\item \textbf{Finite Box Domain:} While we seek to minimize $f(\tau) = \dF(\pi, \sigma + \tau)$ over $\tau \in \R^2$, the above formulation assumes a finite box domain $B$. To reconcile this difference, observe that any translation $\tau$ achieving a Fréchet distance of at most $\delta$ must translate the first (last) point of $\sigma$ such that the first (last) point of $\pi$ is within distance at most $\delta$. Thus, any feasible translation $\tau$ must be contained in the intersection of the two corresponding disks, and we can use any bounding box of this intersection as our box domain $B$.
\item \textbf{(Approximate) Decision Problem:} 
While we seek to decide ``$\min_\tau f(\tau) \le \delta$'', the above formulation solves the corresponding minimization problem. Note that approximate minimization can be used to \emph{approximately} solve the decision problem, but \emph{exactly} solving the decision problem is impossible in the above framework. 
\item \textbf{\boldmath Oracle Access to $f(\tau)$:} Evaluation of $f(\tau)$ corresponds to computing the Fréchet distance of $\pi$ and $\sigma+\tau$, for which we can use previous fast implementations \cite{sigspatial1, sigspatial2,sigspatial3,BringmannKN19socg}. (Actually, these algorithms were designed to answer decision queries of the form ``$f(\tau) \le \delta$?''; we discuss this aspect at the end of this section.)
\end{enumerate}
\begin{figure}
\begin{center}
	\includegraphics[width=.4\textwidth]{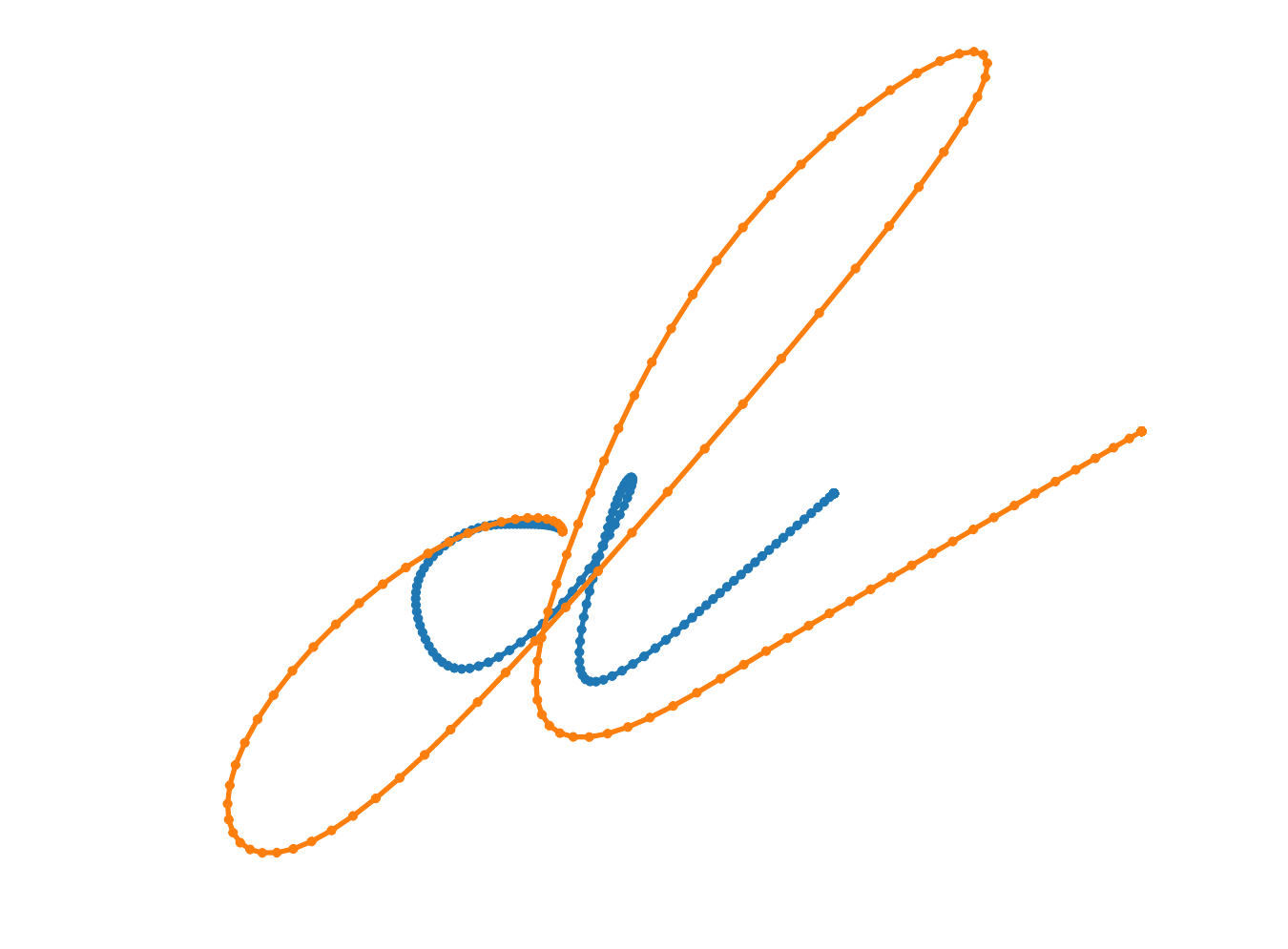}
	\hspace{1cm}
	\includegraphics[width=.4\textwidth]{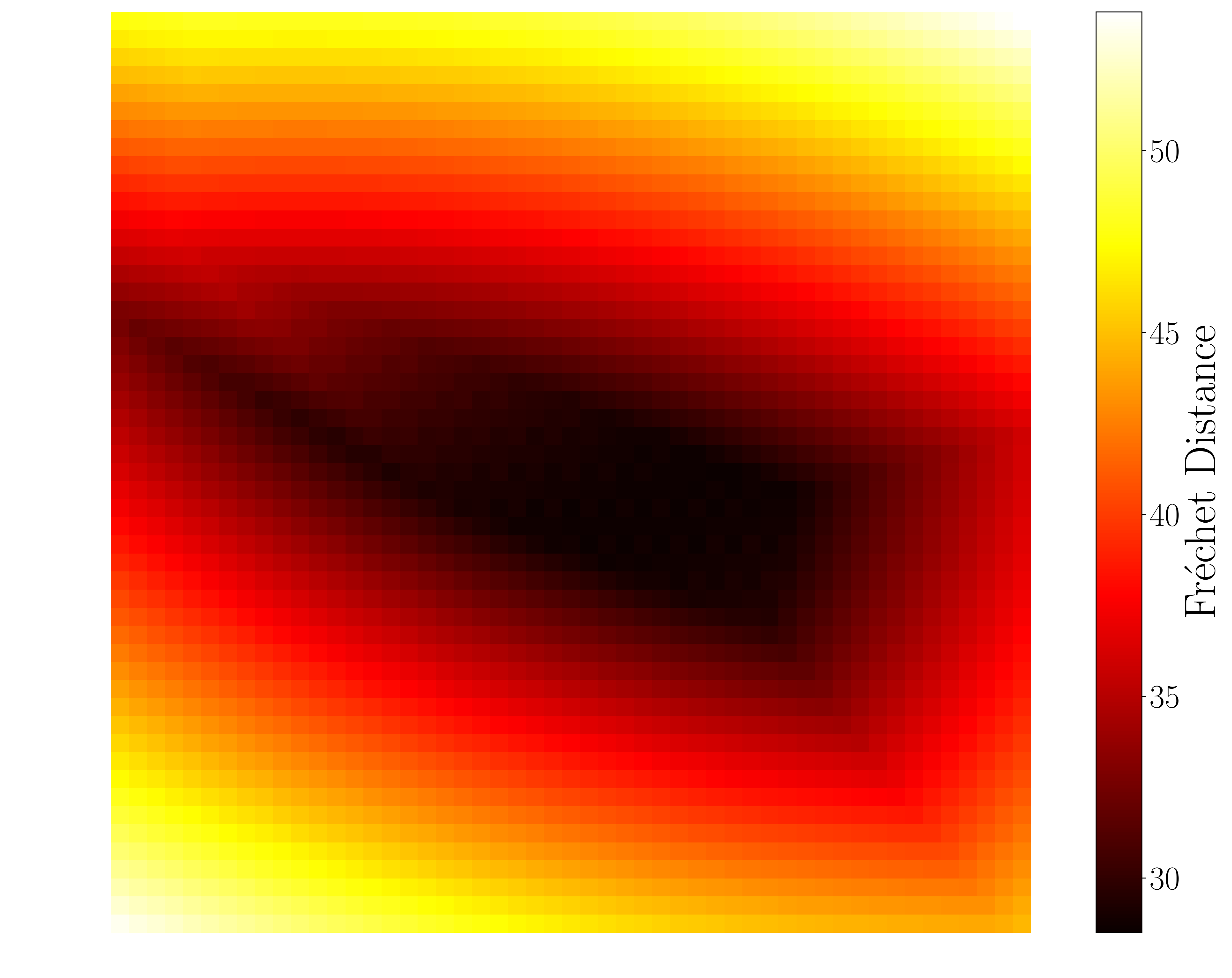}
\end{center}
\caption{Example curves $\pi, \sigma$ (left) together with a plot of the resulting non-convex objective function $f(\tau) = \dF(\pi, \sigma + \tau)$. For a closer look at the area close to the optimal translation (and highly non-convex small-scale artefacts), we refer to Figure~\ref{fig:Lipschitzartefacts}.}
\label{fig:Lipschitzview}
\end{figure}
In Figure~\ref{fig:Lipschitzview}, we illustrate our view of the Fréchet distance under translation as Lipschitz optimization problem. As the figure suggests, on many realistic instances, the problem appears well-behaved (almost convex) at a global scale; using the Lipschitz property, one should be able to quickly narrow down the search space to small regions of the search space\footnote{For an illustration that highly non-convex behavior may still occur at a local level, we refer to Figure~\ref{fig:Lipschitzartefacts}.}. Particularly for this task, it is very natural to consider branch-and-bound approaches, as pioneered by Galperin~\cite{Galperin85, Galperin87, Galperin88, Galperin93} and formalized by Horst and Tuy~\cite{Horst86, HorstT87, HorstT96}, since these have been applied very successfully for low-dimensional Global Lipschitz optimization (and non-convex optimization in general).

\begin{figure}
	\centering
	\includegraphics[width=0.5\textwidth]{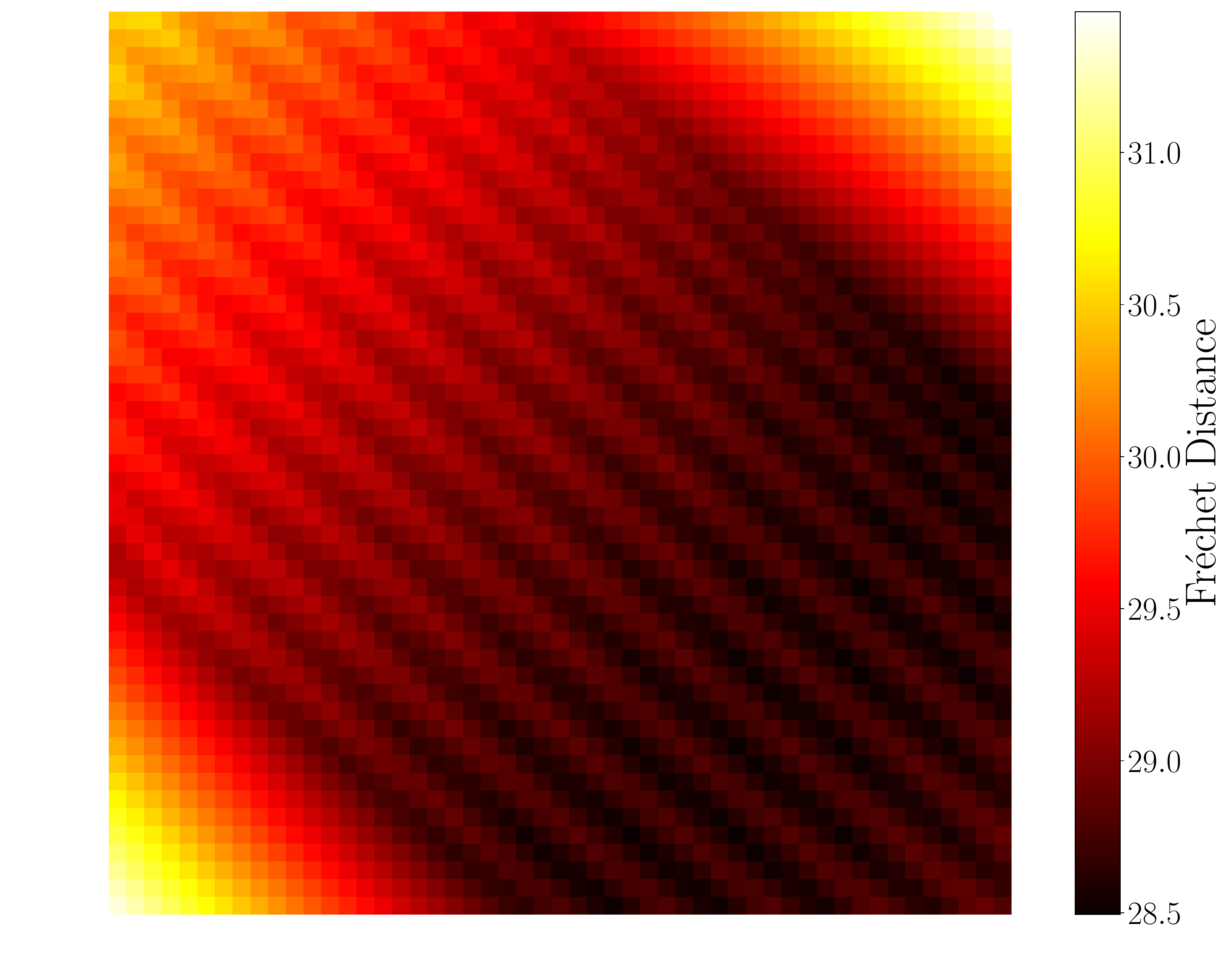}
	\caption{Highly non-convex artefacts of the objective function at a local scale, resulting particularly from the notion of traversals in the \emph{discrete} Fréchet distance.}
	\label{fig:Lipschitzartefacts}
\end{figure}

On a high level, in this approach we maintain a global upper bound $\tilde{\delta}$ and a list of search boxes $B_1, \dots, B_b$ with lower bounds $\ell_1,\dots, \ell_b$ (i.e., $\min_{\tau\in B_i} f(\tau) \ge \ell_i$) obtained via the Lipschitz condition. We iteratively pick some search box $B_i$ and first try to improve the global upper bound $\tilde{\delta}$ or the local lower bound $\ell_i$ using a small number of queries $f(\tau)$ with $\tau \in B_i$ (and exploiting the Lipschitz property). If the local lower bound exceeds the global upper bound, i.e., $\ell_i > \tilde{\delta}$, we drop the search box $B_i$, otherwise, we split $B_i$ into smaller search boxes. The procedure stops as soon as $\tilde{\delta} \le (1+\eps) \min_i \ell_i$, which proves that $\tilde{\delta}$ gives a $(1+\epsilon)$-approximation to the global minimum.

Specifically, we arrive at the following branch-and-bound strategy proposed by Gourdin, Hansen and Jaumard~\cite{GourdinHJ94}. We specify it by giving the rules with which it (i) attempts to update the global upper bound, (ii) selects the next search box from the set of current search boxes, (iii) splits a search box if it remains active after bounding, and (iv) determines the local lower bounds.\footnote{See~\cite{HansenJ95chapter} for a precise formalization of the generic branch-and-bound algorithm that leaves open the instantiation of these rules. In any case, we give a self-contained description of our algorithms in Section~\ref{sec:decider} and~\ref{sec:valueComputation}.}
\begin{enumerate}[(i)]
\item \textbf{Upper Bounding Rule:} We evaluate $f$ at the center $\tau_i$ of the current search box $B_i$. 
\item \textbf{Selection Rule:} We pick the search box with the smallest lower bound (ties are broken arbitrarily).
\item \textbf{Branching Rule:} We split the current search box along its longest edge into $2$ equal-sized subproblems.
\item \textbf{Lower Bounding Rule:} We obtain the local lower bound $\ell_i$ as $f(\tau_i)-d$ where $d$ is the half-diameter of the current box. (Since $f$ is 1-Lipschitz, we indeed have $\min_{\tau\in B_i} f(\tau) \ge \ell_i$.)
\end{enumerate}
One may observe that the chosen selection rule (also known as Best-Node First) is a no-regret strategy in the sense that no other selection rule, \emph{even with prior knowledge of the global optimum}, considers fewer search boxes (see, e.g.,~\cite[Section 7.4]{Wolsey98}).

\subparagraph*{Drawback: Inexactness.}

Unfortunately, the above branch-and-bound approach for Lipschitz optimization fundamentally cannot return an exact global optimum, and thus yields only an approximate decider. 

In a somewhat similar vein, in the above framework we assume that we can evaluate $f(\tau)$ quickly. Previous implementations for the fixed-translation Fréchet distance focus on the decision problem ``$f(\tau)\le \delta$?'', not on determining the value $f(\tau)$. Both precise computations (via parametric search) or approximate computations (using a binary search up to a desired precision) are significantly more costly, raising the question how to make optimal use of the cheaper decision queries. 

\section{Contribution I: An Exact Decider by Combining Both Views} \label{sec:contribution} \label{sec:decider}

Our first main contribution is engineering an exact decider for the discrete Fréchet distance under translation by combining the two approaches.
On a high level, we \emph{globally} perform the branch-and-bound strategy described in the Lipschitz optimization view in Section~\ref{sec:Lipschitzview}, but use as a base case a \emph{local} version of the arrangement-based algorithms of Section~\ref{sec:arrView} once the arrangement size in a search box is sufficiently small. As each search box is thus resolved exactly, this yields an exact decider.
More precisely, our final algorithm is a result of the following steps and adaptations:

\begin{enumerate}[(1)]
\item \textbf{Fréchet Decision Oracle.} We adapt the currently fastest implementation of a decider for the continuous fixed-translation Fréchet distance~\cite{BringmannKN19socg} to the discrete fixed-translation Fréchet distance. Furthermore, to handle many queries for the same curve pair \emph{under different translations} quickly, we incorporate an implicit translation so that curves do not need to be explicitly translated for each query translation $\tau$. 
\item \textbf{Objective Function Evaluation.} For our exact decider, the branch-and-bound strategy in Section~\ref{sec:Lipschitzview} simplifies significantly: We do not maintain a global upper bound and local lower bounds $\ell_i$, but for each box only test whether $f(\tau_i) \le \delta$ (if so, we return YES) or whether $f(\tau_i) > \delta + d$ (this corresponds to updating the local lower bound beyond $\delta$, i.e., we may drop the box completely). Therefore, we may use an arbitrary selection rule. Note that we only require decision queries to the fixed-translation Fréchet algorithm.
\item \textbf{Base Case.} We implement a \emph{local} arrangement-based algorithm: For a given search box~$B_i$, we (essentially) construct the arrangement $\arr\cap B_i$ using CGAL~\cite{cgalarrangement}, and test, for each face $f$ of $\arr \cap B_i$, some translation $\tau' \in f$ for $f(\tau')\le \delta$. This yields the algorithm that we may use as a base case.
\item \textbf{Base Case Criterion.} For each search box, we compute an estimate of its arrangement complexity. If this estimate is smaller than a (tunable) parameter $\sizeparameter$, or the depth of the branch-and-bound recursion for the current search box exceeds a parameter $\depthparameter$, then we use the localized arrangement-based algorithm.
\item \textbf{Benchmark and Choice of Parameters.} We choose the size and depth parameters $\sizeparameter, \depthparameter$ guided by a benchmark set that we create from a set of handwritten characters and synthetic GPS trajectories.
\end{enumerate}

\newcommand{\cbox}{\ensuremath{\tau_B}\xspace}
\newcommand{\dbox}{\ensuremath{d_B}\xspace}

\begin{algorithm}[t]
\begin{algorithmic}[1]
\Procedure{Decider}{$\pi, \sigma, \delta$}
\State decide trivial NO instances with empty initial search box quickly
\State $Q \gets \Call{Fifo}{\text{initial search box}}$ \label{l:fifo}
	\While{$\mathrm{Q} \neq \emptyset$}
	\State $B$ $\gets$ extract front of search box queue $Q$
	\If{$\Call{FréchetDistance}{\pi, \sigma + \cbox} > \delta + \dbox/2$} \label{l:positive_decider} \Comment{\emph{Lower Bounding}}
		\State skip $B$
	\EndIf
	\If{$\Call{FréchetDistance}{\pi, \sigma + \cbox} \leq \delta$} \label{l:negative_decider} \Comment{\emph{Upper Bounding}}
		\State \Return YES
	\EndIf
	\State
	\State $u \gets \text{upper bound on arrangement size inside $B$}$ \label{l:intersect_decider}  
	\If{$u = 0$} \label{l:monotonebox_decider} \Comment{\emph{Arrangement-based Base Case}}
		\State skip $B$
		\ElsIf{$u \leq \sizeparameter \text{ \textbf{or} } \text{layer of $B$ is $\depthparameter$}$} \label{l:arrif_decider}
		\If{local arrangement-based algorithm on $\pi, \sigma, \delta, B$ returns YES}
			\State \Return YES
		\Else
			\State skip $B$
		\EndIf
	\EndIf \label{l:end_arr_decider}
	\State 
	\State halve $B$ along longest edge and push resulting child boxes to $Q$\label{l:branch_decider} \Comment{\emph{Branching}}
\EndWhile
\State \Return NO
\EndProcedure
\end{algorithmic}
	\caption{Algorithm for deciding the Fréchet distance under translation. We use \cbox to denote the center of the box $B$ and \dbox to denote the length of the diagonal of $B$.}
\label{alg:decider}
\end{algorithm}

The pseudocode of the resulting algorithm is shown in Algorithm \ref{alg:decider}.
In the remainder of this section, we describe the details of our Fréchet-under-translation decider.
We first describe the details of the \emph{local} arrangement-based algorithm which serves as the base case for our decider.


\subsection{Local Arrangement-Based Algorithm} \label{sec:arrangement}

Recall that given two polygonal curves $\pi, \sigma$ and a decision distance $\delta$, the set of circles of the arrangement $\mathcal{A}_\delta$ is
\[
	\mathcal{C} \coloneqq \{ C_\delta(\pi_i - \sigma_j) \mid \pi_i \in \pi,\; \sigma_j \in \sigma \},
\]
where $C_r(p)$ denotes the circle of radius $r \in \mathbb{R}$ around $p \in \mathbb{R}^2$. The arrangement is then defined as the partition of $\mathbb{R}^2$ induced by $\mathcal{C}$. In particular, the decision of $\dF(\pi, \sigma+\tau) \leq \delta$ is uniform for each $\tau \in \mathbb{R}$ in the same face of $\mathcal{A}_\delta$ (for a detailed explanation, we refer to \cite[Section 3]{BenAvrahamKS15} or \cite{BringmannKN19soda}). Thus, as already described in Section \ref{sec:arrView}, it suffices to evaluate a representative translation from each face of the arrangement by running a fixed-translation Fréchet decider query on it to reach a Fréchet under translation query decision.

For integration into our branch-and-bound approach where each node in the branch-and-bound tree corresponds to a search box $B$, the base case task is to decide whether there is some $\tau\in B$ with $\dF(\pi, \sigma+\tau) \le \delta$. For this task, we consider \emph{local} arrangements, i.e., arrangements restricted to $B$. A circle $C\in \mathcal{C}$ contributes to the local arrangement of $B$ if the boundary of $C$ intersects the box. In other words, $C$ is relevant for the arrangement of $B$ if $C$ either is completely contained in $B$ or $C$ intersects the boundary of $B$. In particular, $C$ does not contribute to the local arrangement if it contains $B$ completely.

\subparagraph*{Estimation of local arrangement sizes.}
Given a search box $B$, a simple way to estimate the size of the local arrangement for $B$, i.e., the arrangement restricted to $B$, is to consider the number of circles in $\mathcal{C}$ that contribute to it. We can obtain this number naively, by iterating over all $|\mathcal{C}| \le nm$ circles of the global arrangement and check if they contribute to the local arrangement (by checking for intersection and containment as described above). Let this number be denoted by $c$. The maximal number of nodes in the arrangement is then bounded by $u \coloneqq 2(c + c^2)$, as this is the maximal number of intersections between two circles and a circle and the box. In particular, if $u = 0$, then the arrangement in the box belongs to a single face and all translations in $B$ are equivalent for our decision question.

As a simple optimization, we may stop counting contributing circles once our estimate exceeds the threshold $\sizeparameter$. A more sophisticated optimization builds a geometric data structure (specifically a kd-tree) to quickly retrieve all contributing circles without checking all circles in $\mathcal{C}$ naively. We discuss this approach in Section~\ref{sec:valueComputation}, as the expensive preprocessing for constructing this data structure only amortizes in the value computation setting.

\subparagraph*{Construction of local arrangement.}
For a search box $B$ with an estimate smaller than $\sizeparameter$, we construct an arrangement $\arr_B$. To this end, we adapt our arrangement size estimation to also return the set $\mathcal{C}_B$ of circles intersecting $B$ or being contained in $B$. 
Note that computing topologically correct geometric arrangements on such a circle set is a challenging task, as it requires the usage of arbitrary precision numbers to reliably test for intersections and orderings of those intersections. Thus, we use the state-of-the-art computational geometry library CGAL~\cite{cgalarrangement} to build our circle arrangements.\footnote{Specifically, we use the exact predicates and exact computation kernels as this is necessary for CGAL arrangements. The significantly faster kernel for inexact computation is not suitable for the CGAL arrangement package (although, surprisingly, for most instances it actually worked). Being able to use a faster kernel for arrangements should significantly improve our implementation's performance.}
Unfortunately, CGAL only provides methods for building a global arrangement and not an arrangement restricted to a bounding box, thus we always build the whole arrangement of the circles in $ \mathcal{C}_B$ instead of just the arrangement restricted to the box $B$. Alternatively, we could indeed compute circular arcs restricted to the bounding box and then build the arrangement of those arcs. However, due to the rather expensive construction of these arcs, this seems wasteful compared to a direct computation. Thus, a practical performance improvement of our approach could be achieved by directly computing an arrangement with a box restriction. Furthermore, we use the standard bulk-insertion interface for building the arrangement.

\subparagraph*{Resulting local arrangement-based algorithm.}
Finally, given the arrangement $\arr_B$ of the circles $\mathcal{C}_B$, we may simply test a translation $\tau$ for each face $f$ of $\arr_B$ that intersects $B$. In fact, for efficiency, we do this by testing each vertex $\tau$ of $\arr_B$ (even for vertices outside of $B$, as due to the expensive construction of $\arr_B$, it pays off to make the rather cheap tests for positive witnesses also outside of $B$); observe that this ensures that each face $f$ is indeed tested. We return YES if and only if some vertex $\tau$ of $\arr_B$ achieves $\dF(\pi, \sigma + \tau)\le \delta$.

\subsection{Decision Algorithm}


Now, we describe our decider (whose pseudocode is given in Algorithm~\ref{alg:decider}) in more detail.
Recall that an exact decider, given curves $\pi=(\pi_1,\dots,\pi_n), \sigma=(\sigma_1,\dots,\sigma_m)$ and a distance $\delta$, decides whether the Fréchet distance under translation of $\pi$ and $\sigma$ is at most $\delta$, i.e., whether $\dtransF(\pi, \sigma) \leq \delta$.

\subparagraph*{Preprocessing.}
As a first step, we aim to determine an initial search box. Since any $\tau \in \R^2$ with $\dF(\pi, \sigma + \tau) \le \delta$ implies that $\norm{\pi_1 - (\sigma_1+\tau)},\norm{\pi_n - (\sigma_m+\tau)}\le \delta$, we must have that $\tau$ is in the intersection $I \coloneqq D_\delta(\pi_1-\sigma_1) \cap D_\delta(\pi_n-\sigma_m)$, where $D_r(p)$ denotes the disk of radius $r$ around $p$. If this intersection is empty, i.e., $\pi_1-\sigma_1$ and $\pi_n-\sigma_m$ have a distance more than $2\delta$, we return NO immediately. Otherwise, we take a bounding box of the intersection.\footnote{In fact, we use a slightly more refined search box by incorporating additionally the extreme points of both curves.}

\subparagraph{Branch-and-Bound.}
We implement the recursive branch-and-bound strategy using a FIFO queue $Q$ of search boxes (corresponding to a breadth-first search) that is initialized with the initial search box. As long as there are undecided boxes in the queue, we take the first such box $B$ and try to resolve it using the upper bounding rule (point (i) in View II) and the lower bounding rule (point (iv) in View II), which are both derived by queries to the fixed-translation Fréchet distance decider using the center point \cbox{} of the box as translation. Specifically, if $\dF(\pi, \sigma + \cbox) \leq \delta$ (line \ref{l:positive_decider} in Algorithm \ref{alg:decider}), we have found a witness translation and can return YES. The lower bounding rule (line \ref{l:negative_decider}) tests if $\dF(\pi, \sigma + \cbox) > \delta + \dbox/2$, i.e., if the distance at the center point is larger than the test distance $\delta$ plus the maximal distance of any point in the box to the center \cbox{}, i.e., the half-diagonal length $\dbox/2$. If so, by the Lipschitz property, we know that the any translation in $B$ yields a Fréchet distance larger than $\delta$ and thus we can drop $B$.

If neither rule applies, we check our termination criterion of the branch-and-bound strategy. To this end, in line \ref{l:intersect_decider}, we calculate a good upper bound $u$ on the size of the local arrangement for $B$ as described in Section~\ref{sec:arrangement}. If $u = 0$, the arrangement for $B$ consists of a single face, i.e., each translation $\tau \in B$ is equivalent for our decision problem, and we can skip the box since we have already tested the translation $\cbox\in B$. Otherwise, in line \ref{l:arrif_decider}, if $u \neq 0$, we check if the number is bounded by a size parameter $\sizeparameter$ or the depth of the current search box (in the implicit recursion tree) is bounded by a depth parameter $\depthparameter$. If so, we run the local arrangement-based algorithm to decide $B$.

If none of the above rules decide the search box $B$, we split it along its longer side into two equal-sized child boxes and push them to the queue. If all boxes have been dropped without finding a witness translation, we have verified that any translation $\tau \in B$ yields $\dF(\pi, \sigma + \tau) > \delta$ and may safely return NO.



\subparagraph*{Low-level optimizations.}
For further practical speed-ups, we employ several low-level optimizations, which we briefly mention here (for further details, we refer to the source code of our implementation).

For each box in the branch-and-bound tree we need a differently translated curve. However, often we barely access the nodes of the translated curve. For example, if already the start nodes of the curves are too far, we do not need to consider the remainder. Thus, it seems wasteful to translate each point of the curves before calling the fixed-translation Fréchet decider. To avoid this overhead, we lazily translate the necessary parts of a curve on access. In fact, while the currently fastest implementation of the fixed-translation Fréchet distance decider~\cite{BringmannKN19socg} uses a preprocessing of the curves that computes all prefix lengths and extrema of the curves, we only need to perform this preprocessing once, as all computed information is either invariant under translations (for the prefix lengths) or can just be shifted by the translation (for the extrema).

Furthermore, while the initial bounding box is derived from the discs around the translation between the start nodes and the translation between the end nodes, later child boxes in the branch-and-bound tree might violate this condition. We therefore re-check this condition on creating child boxes.
Additionally, in line \ref{l:arrif_decider} of Algorithm \ref{alg:decider} we check if the depth parameter $\depthparameter$ is reached. This can actually already be done before line \ref{l:negative_decider}, which we also do in the implementation, but for the sake of brevity, we present it differently in the pseudocode.

\section{Contribution II: Approximate Computation of the Distance Value}
\label{sec:introValueComputation}
\label{sec:valueComputation}

In this section we present our second main contribution: an algorithm for computing the value of the Fréchet distance under translation.
Thus, we now focus on the functional task of computing the value $\dtransF(\pi,\sigma) = \min_{\tau \in \R^2} \dF(\pi, \sigma + \tau)$, in contrast to the previously discussed decision problem ``$\dtransF(\pi,\sigma) \le \delta$?''.
In theory, one could use the paradigm of parametric search~\cite{Megiddo83}, see~\cite{BenAvrahamKS15, BringmannKN19soda} for details for the discrete case. However, it is rarely used in practice as it is non-trivial to code, and computationally costly.
Instead, as in most conceivable settings an estimate with small multiplicative error $(1 \pm \eps)$ with, e.g., $\eps = 10^{-7}$, suffices, we consider the problem of computing an estimate in $(1\pm \epsilon)\dtransF(\pi,\sigma)$.

\begin{algorithm}[t]
\begin{algorithmic}[1]
\Procedure{\lmf}{$\pi, \sigma$}
\State Preprocessing: build data structures for fast arrangement estimation and construction\label{l:kd_tree}
\State compute initial distance interval $[\deltaLB, \deltaUB]$ containing $\dtransF(\pi, \sigma)$ \label{l:initial_estimates}
\State initialize global upper bound $\tilde{\delta} \gets \deltaUB$
\State $Q \gets \Call{PriorityQueue}{\text{initial search box $B_{1}$ with local lower bound $\ell_{B_1}\gets \deltaLB$}}$ \label{l:pq}
\While{$Q \neq \emptyset$}
	\State $B \gets$ box with smallest local lower bound $\ell_B$ in $Q$
	\If{$\tilde{\delta} \leq \ell_B(1 + \epsilon)$} \label{l:bound_start}
		\State skip $B$
	\EndIf
	\If{$\Call{FréchetDistance}{\pi, \sigma + \cbox} \le \tilde{\delta}$} \label{l:ub_improvement1} \Comment{\emph{Upper/Lower Bounding}}
	\State compute value $\dF(\pi, \sigma+\cbox)$ with high precision and update $\tilde{\delta}$ and $\ell_B$ \label{l:value_comp1}
	\Else
		\If{$\Call{FréchetDistance}{\pi, \sigma + \cbox} > \tilde{\delta} + \dbox/2$} 
			\State skip $B$
		\EndIf
		\State compute value $\dF(\pi, \sigma + \cbox)$ with coarse precision and update $\ell_B$ \label{l:value_comp2}
	\EndIf

	\If{$\tilde{\delta} \leq \ell_B(1 + \epsilon)$}
		\State skip $B$
	\EndIf \label{l:bound_end}

	\State $u \gets \text{upper bound on arrangement size inside $B$ for $\delta \in [\ell_B, \tilde{\delta}]$}$ \label{l:arr_start}
	\If{$u = 0$} \Comment{\emph{Arrangement-based Base Case}}
	  \State skip $B$
	\ElsIf{$u \leq \sizeparameter$ or layer of $B$ is $\depthparameter$} \label{l:arrif_lmf}
		\State update $\tilde{\delta}$ via binary search over arrangement algorithm on $B$ and $\delta \in [\ell_B, \tilde{\delta}]$ \label{l:ub_improvement2}
		\State skip $B$
	\EndIf \label{l:arr_end}
	\State 
	\State push child boxes of $B$ to $Q$ with local lower bounds set to $\ell_B$ \label{l:branch_lmf} \Comment{\emph{Branching}}
\EndWhile
\State \Return $\tilde{\delta}$
\EndProcedure
\end{algorithmic}
\caption{Algorithm of our Lipschitz-Meets-Fréchet (LMF) algorithm for approximate value computation. We use \cbox to denote the center of the box and \dbox to denote the length of the diagonal.}
\label{alg:lmf}
\end{algorithm}

There are several possible approaches to obtain an approximation with multiplicative error $(1\pm\epsilon)$ for arbitrarily small $\epsilon > 0$:
\begin{enumerate}
\item \textbf{$\eps$-approximate Set:} A natural approach underlying previous approximation algorithms~\cite{alt2001matching} is to generate a set of $f(1/\eps)$ candidate translations $T$ such that the best translation $\tau$ among this set gives a $(1+\eps)$-approximation for the Fréchet distance under translation. 
Specifically, it is simple to obtain a bounding box $B$ of side length $\Oh(\delta)$ for the optimal translation $\tauopt$ (see the 2-approximation in Section~\ref{sec:prelim} together with the preprocessing described in Section~\ref{sec:decider}). We impose a grid of side length at most $(\eps/\sqrt{2}) \delta $ so that each each point in $B$ is within distance $\eps \delta$ of some grid point. Since the Fréchet distance is Lipschitz, this yields a $(1+\epsilon)$-approximate set. Unfortunately, this set is of size $\Theta(1/\eps^2)$ which is prohibitively large for approximation guarantees such as $\eps = 10^{-7}$. \\
    \textit{Remark:} In the context of global Lipschitz optimization, this approach is known as the \emph{passive algorithm} whose performance generally is dominated by (the adaptive) branch-and-bound methods.

\item \textbf{Binary Search via Decision Problem:} A further canonical approach is to reduce the $(1+\eps)$-approximate computation task to the decision problem using a binary search. 
Formally, let $\deltaopt$ denote the Fréchet distance under translation. Starting from a simple $2$-approximation $\deltaUB$ (see Section~\ref{sec:prelim}, or, more precisely, the initial estimates discussed later in this section), we use a binary search in the interval $[0.5 \cdot \deltaUB,\deltaUB]$, terminating as soon as we arrive at an interval of length $[a,b]$ with $b \le (1+\epsilon)a$. As this takes only $\Oh(\log(1/\eps))$ iterations to obtain an $(1+\eps)$-approximation, this approach is much more suitable to obtain a desired guarantee of $\eps = 10^{-7}$.

\item \textbf{Lipschitz-only Optimization:} The main drawback of the generic Lipschitz optimization algorithms discussed in Section~\ref{sec:Lipschitzview} was that they cannot be used to derive an exact answer. This drawback no longer applies for approximate value computation. We can thus use a pure branch-and-bound algorithm for global Lipschitz optimization. In particular, we will use the same strategy as our fastest solution, however, we never use the arrangement-based algorithm, but only terminate at a search box once the local lower bound and global upper bound provide a $(1+\eps)$-approximation.

\item \textbf{Our solution, Lipschitz-meets-Fréchet:} We follow our approach of combining Lipschitz optimization with arrangement-based algorithms (described in Section~\ref{sec:approach}) to compute a $(1+\eps)$-approximation of the distance value. As opposed to the decision algorithm, we indeed maintain a global upper bound $\tilde{\delta}$ and local lower bounds $\ell_i$ for each search box $B_i$. To update these bounds, we approximately evaluate the objective function $f(\tau)$ using a tuned binary search\footnote{We tune the binary search by distinguishing the precision with which we want to evaluate $f(\tau)$; intuitively, it pays off to evaluate $f(\tau)$ with high precision if this evaluation yields a better global upper bound, while for improvements of a local lower bound, a cheaper evaluation with coarser precision suffices.} over the fixed-translation Fréchet decider algorithm. We stop branching in a search box $B_i$ if either the global upper bound $\tilde{\delta}$ is at most $\ell_i(1 + \eps)$, or a base case criterion similar to the decision setting applies. As selection strategy, we employ the no-regret strategy of choosing the box with the smallest lower bound first. The base case performs a binary search using the local arrangement-based \emph{decision} algorithm; thus, our upper bound on the arrangement size must hold for \emph{all} $\delta$ in the search interval.
The pseudocode of our solution is shown in Algorithm \ref{alg:lmf}.

We present the details of our approach in the remainder of this section.
As our experiments reveal, our solution generally outperforms the above described alternatives, see Section~\ref{sec:experiments}.

\subparagraph*{Remark:}
To enable a fair comparison of the Lipschitz-meets-Fréchet (LMF)  approach to the alternative approaches of Binary Search and Lipschitz-only optimization, we take care that the low-level optimizations for LMF described in the reminder of this section are also applied to these approaches, as far as applicable. In particular, we use the same method to obtain initial estimates for the desired value for LMF, Binary Search and Lipschitz-only optimization, and adapt the kd-tree-based data structure used to speed-up estimation and construction of arrangements for LMF also for Binary Search (note that these tasks do not apply to Lipschitz-only optimization).



\end{enumerate}

We now present details of our solution for the (approximate) value computation setting, the \lmf algorithm.
We first consider the base case (which differs from the base case of the decider, given in Section~\ref{sec:arrangement}), before we discuss further details.

\subsection{Local Arrangement-Based Algorithm for Value Computation}

Our base case problem is the following: Given curves $\pi, \sigma$, a \emph{test distance interval} $I=[\deltaLB,\deltaUB]$ and a search box $B$, we let $\deltaopt \coloneqq \min_{\tau \in B} \dF(\pi, \sigma+\tau)$ and ask to determine whether $\deltaopt \in I$, and if so, an estimate $\delta'$ with $|\delta' - \deltaopt| \le \eps$.

The central idea is to solve this task via a binary search for $\deltaopt \in I$ using our local arrangement-based algorithm of Section~\ref{sec:arrangement} to decide queries of the form ``$\deltaopt \le \delta$?'' for any given $\delta$. For this algorithm to run quickly, we need that for \emph{any} queried distance $\delta$, the corresponding local arrangement for the test distance $\delta$ is small. To this end, we seek to obtain a strong upper bound for the local arrangement size over \emph{worst-case} $\delta \in I$.

\subparagraph{Estimation of local arrangement sizes.}
Given an interval $I=[\deltaLB, \deltaUB]$ of test distances, instead of the \emph{circles} defined in Section~\ref{sec:arrangement}, we consider the set of \emph{annuli}
\[
	\mathcal{D} \coloneqq \{ D_{\deltaUB}(\pi_i - \sigma_j) \setminus D_{\deltaLB}(\pi_i - \sigma_j) \mid \pi_i \in \pi,\; \sigma_j \in \sigma \},
\]
where $D_r(p)$ denotes the disk of radius $r \in \mathbb{R}$ around $p \in \mathbb{R}^2$. Clearly, if a circle $C_{\delta}(\pi_i - \sigma_j)$ contributes to the local arrangement  of $B$ for \emph{some} test distance $\delta\in [\deltaLB, \deltaUB]$, then the corresponding annulus $D_{\deltaUB}(\pi_i - \sigma_j) \setminus D_{\deltaLB}(\pi-\sigma_j)$ intersects $B$ or is contained in $B$. Thus by determining the number $d$ of annuli $a\in \mathcal{D}$ that intersect $B$ or are contained in $B$, we may bound the local arrangement size for $B$ for any $\delta \in [\deltaLB, \deltaUB]$ by $u\coloneqq 2(d+d^2)$ (analogously to Section~\ref{sec:arrangement}).

To obtain the above upper bound efficiently, we implement a geometric search data structure based on the \emph{kd-tree}. Specifically, we build a kd-tree on the set of center points of all annuli in $\mathcal{D}$. Given a search box $B$, we seek to determine all centers of annuli $a\in \mathcal{D}$ that intersect $B$ or contain $B$. While this condition can be described using a constant (but large) set of simple primitives, evaluating this test frequently for many kd-tree nodes is costly. Thus, to determine whether a node in the kd-tree needs to be explored, we use a more permissive, but cheaper test which essentially approximates the search box $B$ by its center point: we search for all candidate points that are contained in an annulus of width roughly $|I|$ plus half the diameter of $B$, centered at the center of $B$, and test for each such point whether the corresponding annulus in $\mathcal{D}$ indeed intersects $B$. 

Again, we implement this search for contributing annuli such that we return the centers of all found annuli. This can subsequently be used by the local arrangement-based algorithm to quickly construct the arrangement for each query. Furthermore, we again stop the search as soon as the numbers of such annuli exceeds $\sizeparameter$.

\subparagraph*{Binary Search via local arrangement-based algorithm.}

To obtain the desired estimate for $\deltaopt$ in the case that our size estimate is bounded by $\sizeparameter$, we use a binary search via our local arrangement-based algorithm. As a low-level optimization to speed-up the construction of the local arrangement for a query distance $\delta$, we pass the centers of contributing annuli to the local arrangement-based algorithm. Furthermore, as described in Section~\ref{sec:arrangement}, we let the arrangement-based decision algorithm test \emph{all} vertices in the arrangement of all circles $\mathcal{C}_B$ contributing to the search box $B$, not only vertices in $B$. As this can only decrease the returned estimate (by finding a corresponding witness), this does not affect correctness of the algorithm. 

\subsection{Overview and Details for LMF}

The pseudocode of the LMF algorithm is shown in Algorithm \ref{alg:lmf}. When referring to lines in the remainder of this section, we refer to lines in this algorithm. Before we address some aspects and optimizations in detail, we give a short overview over the algorithm. First, note that as our selection strategy is different from the decider setting, we now use a priority queue for the boxes, see line \ref{l:pq}. In lines \ref{l:bound_start} to \ref{l:bound_end} the bounding happens and in lines \ref{l:arr_start} to \ref{l:arr_end} we check if the base case criterion applies, and if it does, determine the value for this box using the arrangement-based approach.
Finally, in line \ref{l:branch_lmf} we branch if we did not already skip the box.

\subparagraph*{Initial estimates.}
In line \ref{l:initial_estimates} we calculate initial estimates for the upper and lower bound. To this end, we consider the translation $\taustart$ (resp. $\tauend$) that aligns the first (resp. last) points of $\pi,\sigma$ as it yields a $2$-approximation $\deltastart \coloneqq \dF(\pi,\sigma+\taustart)$ (resp. $\deltaend \coloneqq \dF(\pi, \sigma + \tauend)$). Using the best of both approximations, our initial estimation interval for $\dtransF(\pi, \sigma)$ is $[\deltaLB, \deltaUB] \coloneqq [\max\{\deltastart,\deltaend\}/2, \min\{\deltastart,\deltaend\}]$, see Section~\ref{sec:prelim}.

\subparagraph*{Priority queue.}
To implement our smallest-lower-bound-first selection rule, we use a priority queue to organize the search boxes, using the local lower bounds as keys. Recall that this yields a no-regret selection strategy for our branch-and-bound framework.

\subparagraph*{Objective function evaluation: Computing Fréchet distance via Fréchet decider.}
To update our global upper bound and local lower bounds, we need to determine Fréchet distance \emph{values} rather than decisions (which were sufficient for our decider), see lines \ref{l:value_comp1} and \ref{l:value_comp2}. However, we do not always need a very precise calculation. While the upper bound is global and thus an improvement might lead to significant progress by dropping a number of search boxes, the lower bound only has an effect on the box itself and on its children. Thus, we use a coarse distance computation (i.e., an approximation up to a larger additive constant) for the lower bound in line \ref{l:value_comp2}, but a more precise calculation for the upper bound in line \ref{l:value_comp1}.

In two cases (lines \ref{l:ub_improvement1} and \ref{l:ub_improvement2}) we are only interested in the exact Fréchet distance value if it is smaller than the current global upper bound. Thus, as is hidden in the pseudocode, we first check if there is an improvement at all, and only if this is the case, we compute the actual value using a binary search.

\subparagraph*{Additive vs. multiplicative approximation.} Due to rounding issues that occur at decisions depending on extremely small value differences when using fixed precision arithmetic, we use an additive approximation of $\epsilon = 10^{-7}$ instead of a multiplicative approximation to ensure that these issues do not arise on usage of our implementation with arbitrary data sets. Note that all computed distances in our benchmarks have a value larger than $1$, and thus also in terms of multiplicative approximation $(1+\epsilon')$, we have $\epsilon' \le 10^{-7}$.




%

\newcommand{\deltaoptLB}{\delta_{\mathrm{LB}}}
\newcommand{\deltaoptUB}{\delta_{\mathrm{UB}}}

\section{Experiments} \label{sec:experiments}


To engineer and evaluate our approach, we provide a benchmark on the basis of the curve datasets that were used to evaluate the currently fastest fixed-translation Fréchet decider implementation in~\cite{BringmannKN19socg}. In particular, this curve set involves a set of handwritten characters (\characters,~\cite{characters_dataset}) and the data set of the GIS Cup 2017 (\sigspatial, \cite{sigspatial_dataset}). Table~\ref{tab:datasets} gives statistics of these datasets.

\begin{table}[ht]
\centering
\caption{Information about the data sets used in the benchmarks.}
\begin{tabular}{llccc}
\toprule
Data set & Type & \#Curves & Mean \#vertices \\
\midrule
\sigspatial \cite{sigspatial_dataset} & synthetic GPS-like& 20199 & 247.8 \\
\midrule
\multirow{2}{*}{\characters \cite{characters_dataset}} & \multirow{2}{*}{20 handwritten characters} & 2858 & \multirow{2}{*}{120.9} \\
  & & (142.9 per character) \\
\bottomrule
\end{tabular}
\label{tab:datasets}
\end{table}

The aim of our evaluations is to investigate the following main questions:
\begin{enumerate}
\item Is our solution able to decide queries on realistic curve sets in an amount of time that is practically feasible, even when the size of the arrangement suggests infeasibility? 
\item Is our combination of Lipschitz optimization and arrangement-based algorithms for value computation superior to the alternative approaches described in Section~\ref{sec:introValueComputation}?
\end{enumerate}
Furthermore, we aim to provide an understanding of the performance of our novel algorithms.

\subparagraph*{Decider experiments.}

\begin{figure}
\includegraphics[width=\textwidth]{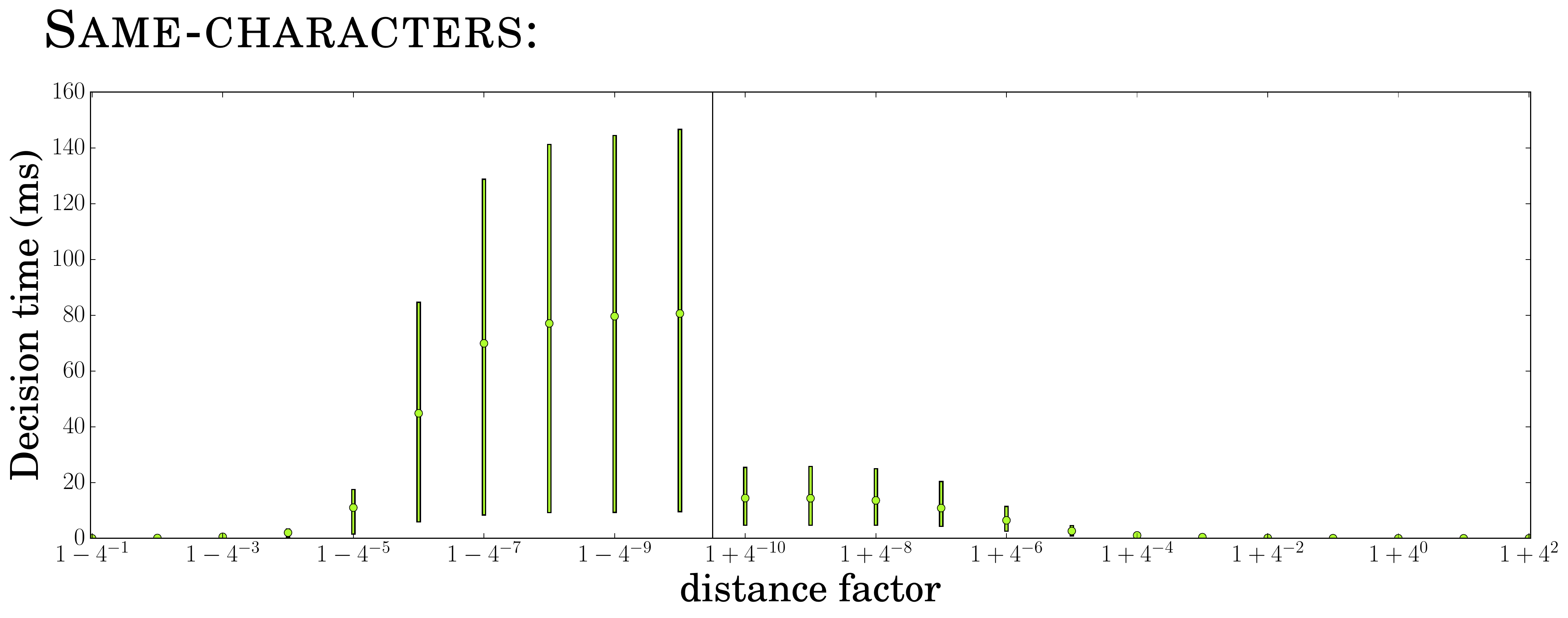}
\includegraphics[width=\textwidth]{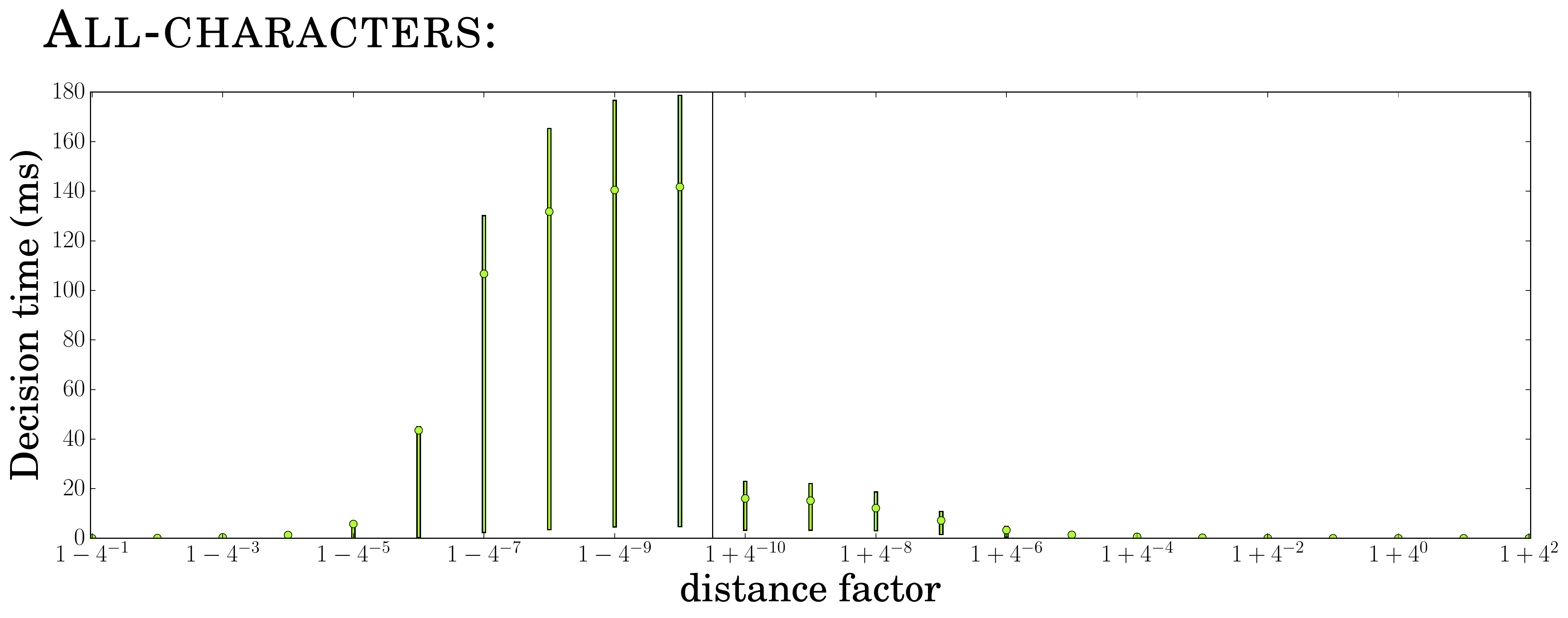}
\includegraphics[width=\textwidth]{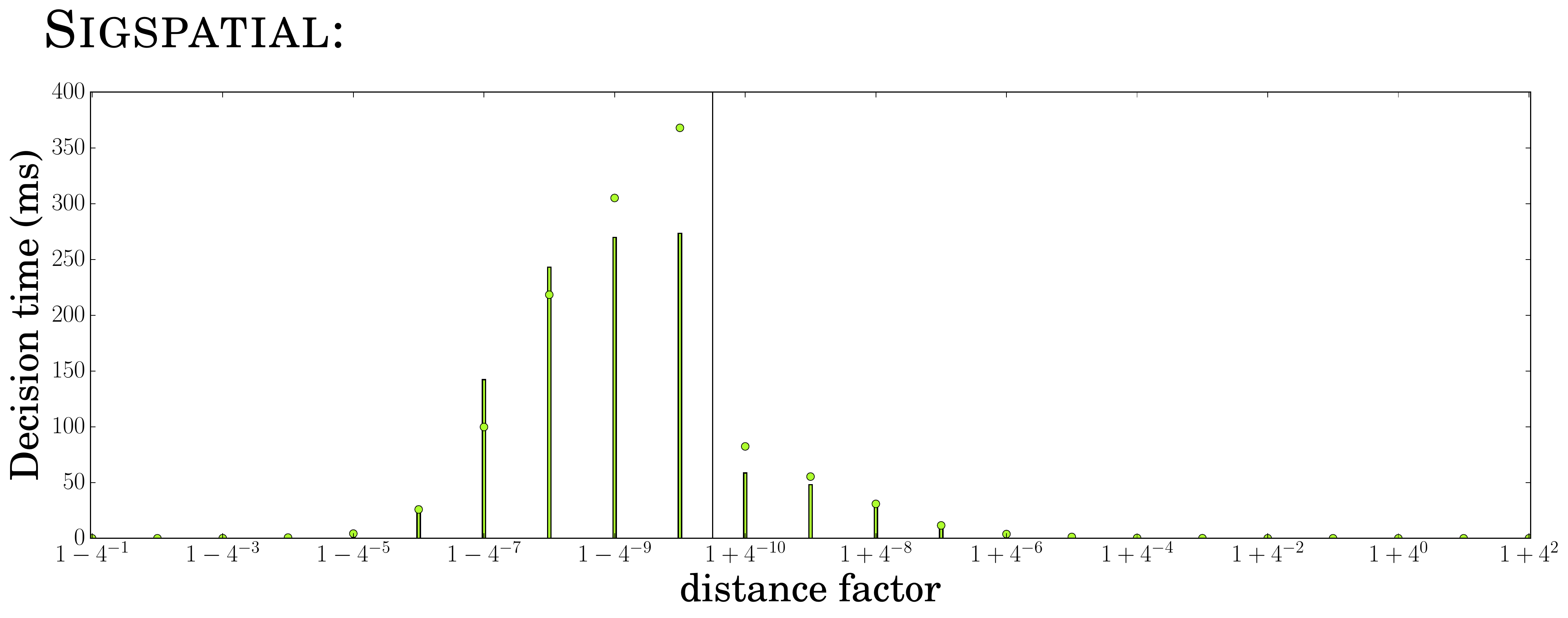}
\caption{Running time for our decider. We plot the mean running times over 1000 NO (or YES) queries with a test distance of approximately $(1-4^{-\ell})$ (or $(1+ 4^{-\ell})$) times the true Fréchet distance under translation, as well as the interval between the lower and upper quartile over the queries.}
\label{fig:decider-times}
\end{figure}

For decision queries of the form ``$\dtransF(\pi, \sigma) \le \delta$?'', we generate a benchmark query set that distinguishes between how close the test distance is to the actual distance of the curves: Given a set of curves $C$, we sample $1000$ curve pairs $\pi, \sigma\in C$ uniformly at random. Using our implementation, we determine an interval $[\deltaoptLB, \deltaoptUB]$ such that $\deltaoptUB - \deltaoptLB \le 2\cdot 10^{-7}$ and $\dtransF(\pi, \sigma) \in [\deltaoptLB,\deltaoptUB]$. For each $\ell \in \{-10, \dots, -1\}$, we add ``$\dtransF(\pi, \sigma) \le (1-4^{\ell})\deltaoptLB$?'' to the query set $C^{\textup{NO}}_\ell$, which contains only NO instances. Similarly, for each $\ell \in \{-10,\dots, 2\}$ we add ``$\dtransF(\pi, \sigma) \le (1+4^{\ell})\deltaoptUB$?'' to the query set $C^{\textup{YES}}_\ell$, which contains only YES instances. We evaluate our decider on this benchmark created for the \characters and \sigspatial data sets.
Furthermore, we give results for a further benchmark set generated from the \characters curve set by sampling, for each of the 20 characters~$c$ included in \characters, 50 curve pairs $\pi, \sigma$ representing the \emph{same character} $c$. This yields a benchmark that has the same size of 1000 query curve pairs, but compares only same-character curves. We show the mean running times on these three benchmark sets in Figure~\ref{fig:decider-times}. As before, we also depict the number of black-box calls of our decider and, as a baseline, an estimate of the arrangement size (and thus the number of black-box calls of a naive approach) in Figure~\ref{fig:decider-bbcalls}. Note that for small ranges of the test distance $\delta$, it may happen that we decide a NO instance without a single black-box call by determining that the distance between $\pi_1 - \sigma_1$ and $\pi_n - \sigma_m$ is larger than $2\delta$; corresponding values below $1$ call are not depicted in Figure~\ref{fig:decider-bbcalls}.

To give an insight for the speed-up achieved over the baseline arrangement-based algorithm that makes a black-box call to the fixed-translation Fréchet decider for each face of the arrangement $\arr_\delta$, in Figure~\ref{fig:decider-bbcalls} we depict both the number of black-box calls to the fixed-translation Fréchet decider made by our implementation, as well as an estimate\footnote{We only give an estimate for the arrangement size, since the size of the arrangement is too large to be evaluated exactly for all our benchmark queries within a day. Specifically, we estimate the number of vertices of the arrangement which closely corresponds to the number of faces by Euler's formula. We give the following estimate: We first determine a search box $B$ for the given decision instance $\pi = (\pi_1,\dots, \pi_n),\sigma=(\sigma_1,\dots,\sigma_m),\delta$ as described for our algorithm. We then sample $S=100000$ tuples $i_1,i_2 \in \{1,\dots, n\},j_1,j_2 \in \{1,\dots, m\}$ and count the number $I$ of intersections of the circles of radius $\delta$ around $\pi_{i_1}-\sigma_{j_1}$ and $\pi_{i_2} -\sigma_{j_2}$ inside $B$. The number $(I/S)\cdot (nm)^2$ is the estimated number of circle-circle intersections in $B$. Adding the number of circle-box intersections, which we can compute exactly, yields our estimate.} for the arrangement size, and thus the number of black-box calls of the baseline approach.


\begin{figure}
\includegraphics[width=\textwidth]{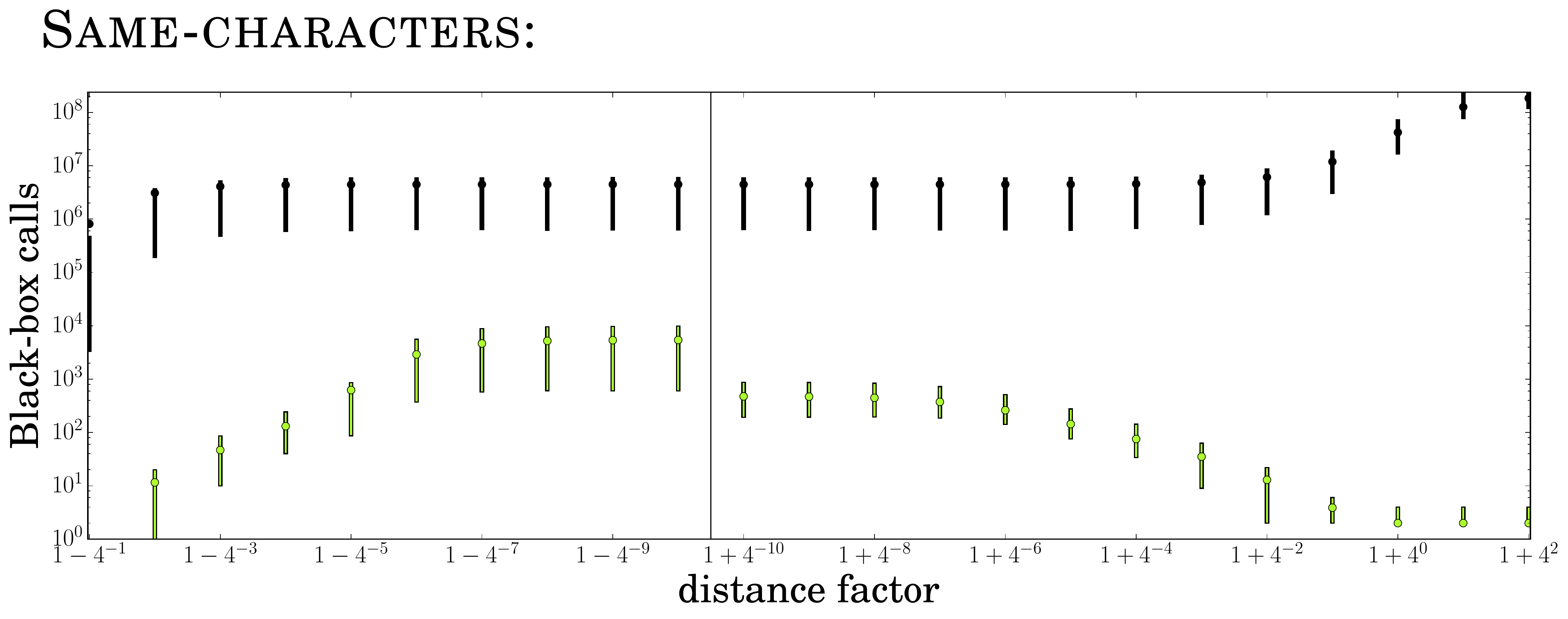}
\includegraphics[width=\textwidth]{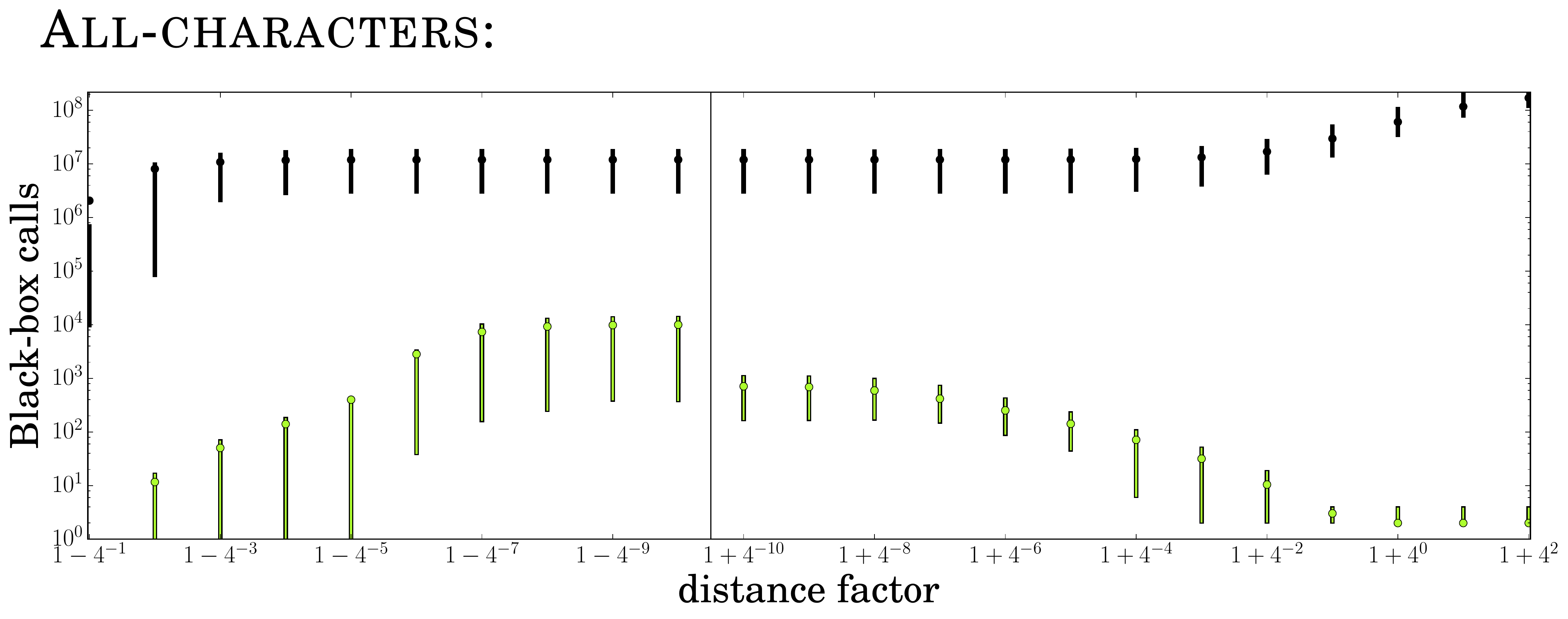}
\includegraphics[width=\textwidth]{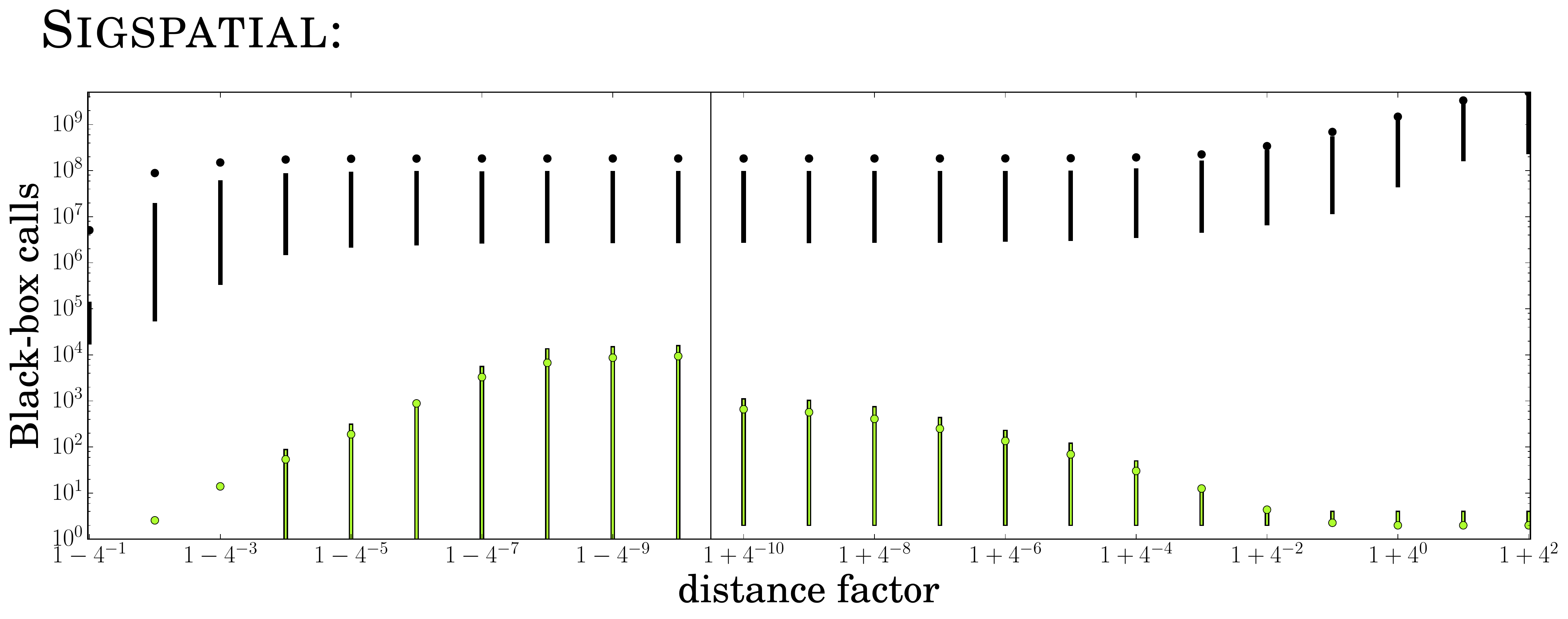}
\caption{Number of black-box calls to the fixed-translation Fréchet decider made by our decider (below, in green), as well as an estimate of the arrangement complexity, i.e., number of calls of a naive algorithm (above, in black).  We plot the mean number of calls and arrangement complexity over 1000 NO (or YES) queries with a test distance of approximately $(1-4^{-\ell})$ (or $(1+ 4^{-\ell})$) times the true Fréchet distance under translation, as well as the interval between the lower and upper quartile over the queries.}
\label{fig:decider-bbcalls}
\end{figure}

We observe that on the above sets, the average decision time ranges from below 1 ms to 142 ms, deciding our \characters benchmark (involving $23,000$ queries) in 628 seconds. Our estimation suggests that a naive implementation of the baseline arrangement-based algorithm would have been worse by more than \emph{three orders of magnitude}, as for each set, the average number of black-box calls to the fixed-translation Fréchet decider is smaller by a factor of at least $1000$ than our estimation of the arrangement size. See Table \ref{table:components_decider} for the detailed timing results of our decider on the benchmarks described above.

\begin{table}
\caption{Time measurements for the components of the decider over the complete decider benchmark sets. In parentheses, we give average values over the total of 23,000 decision instances.}
\begin{tabular}{llrc}
\toprule
\textsc{same-characters} & \multicolumn{2}{c}{\textbf{Time}} &  \textbf{Black-Box Calls}  \\
\midrule
& \multicolumn{2}{c}{429,623 ms} & 26,661,524 \\
& \multicolumn{2}{c}{(18.7 ms per instance)} & (1,159.2 per instance)  \\
\cmidrule(r){2-3} 
& - Preprocessing & 5 ms  \\
\cmidrule(r){2-3}
& - Black-box calls (Lipschitz) & 44,312 ms  \\
\cmidrule(r){2-3}& - Arrangement estimation &157,780 ms  \\
\cmidrule(r){2-3}
& - Arrangement algorithm & 226,469 ms  \\
& \hphantom{bla} * Construction & 148,898 ms \\
& \hphantom{bla} * Black-box calls & 60,156 ms \\
\bottomrule
\toprule
\textsc{all-characters} & \multicolumn{2}{c}{\textbf{Time}} &  \textbf{Black-Box Calls}  \\
\midrule
& \multicolumn{2}{c}{628,043 ms} & 42,781,931 \\
& \multicolumn{2}{c}{(27.3 ms per instance)} & (1,860.08 per instance)  \\
\cmidrule(r){2-3} 
& - Preprocessing & 5 ms  \\
\cmidrule(r){2-3}
& - Black-box calls (Lipschitz) & 50,462 ms  \\
\cmidrule(r){2-3}& - Arrangement estimation &191,177 ms  \\
\cmidrule(r){2-3}
& - Arrangement algorithm & 385,145 ms  \\
& \hphantom{bla} * Construction & 237,043 ms \\
& \hphantom{bla} * Black-box calls & 120,149 ms \\
\bottomrule 
\toprule
\textsc{sigspatial} & \multicolumn{2}{c}{\textbf{Time}} &  \textbf{Black-Box Calls}  \\
\midrule
& \multicolumn{2}{c}{1,207,560 ms} & 31,420,517 \\
& \multicolumn{2}{c}{(52.5 ms per instance)} & (1,366.11 per instance)  \\
\cmidrule(r){2-3} 
& - Preprocessing & 5 ms  \\
\cmidrule(r){2-3}
& - Black-box calls (Lipschitz) & 43,861 ms  \\
\cmidrule(r){2-3}& - Arrangement estimation &913,266 ms  \\
\cmidrule(r){2-3}
& - Arrangement algorithm & 249,268 ms  \\
& \hphantom{bla} * Construction & 155,332 ms \\
& \hphantom{bla} * Black-box calls & 73,934 ms \\
\bottomrule 
\end{tabular}
\label{table:components_decider}
\end{table}

\subparagraph*{Approximate value computation experiments.}

We evaluate our implementation of the algorithm presented in Section \ref{sec:valueComputation} by computing an estimate $\tilde{\delta}$ such that $|\tilde{\delta} - \dtransF(\pi, \sigma)| \le \eps$ with a choice of $\eps = 10^{-7}$.\footnote{Here we use additive rather than multiplicative approximation for technical reasons. Since all computed distances are within $[1.6,120.7]$, this also yields a multiplicative $(1+\eps)$-approximation with $\eps \le 10^{-7}$.} In particular, we compare the performances of the different approaches discussed in Section~\ref{sec:valueComputation}:
\begin{itemize}
\item \textbf{Binary Search:} Binary search using our Fréchet-under-translation decider of Section~\ref{sec:decider}.
\item \textbf{Lipschitz-only:} Algorithm \ref{alg:lmf} without the arrangement, i.e., without lines \ref{l:arr_start} to \ref{l:arr_end}.
\item \textbf{Lipschitz-meets-Fréchet (LMF):} Our implementation as detailed in Section~\ref{sec:valueComputation}.
\end{itemize}
Since simple estimates show that the $\eps$-approximate sets are clearly too costly for $\eps = 10^{-7}$, we drop this approach from all further consideration. We took care to implement all approaches with a similar effort of low-level optimizations.

For our evaluation, we focus on the \characters data set which allows us to distinguish the rough shape of the curves: We subdivide the curve set into the subsets $C_{\alpha}$ for $\alpha \in \Sigma$ (where $\Sigma$ is the set of $20$ characters occurring in \characters). In particular for each character pair $\alpha, \beta \in \Sigma$, we create a sample of $\Nsamples$ curve pairs $(\pi, \sigma)$ chosen uniformly at random from $C_\alpha \times C_\beta$.
For $\Nsamples = 5$, computing the value (up to $\eps = 10^{-7}$) for all $\Nsamples \cdot ({|\Sigma| \choose 2}+|\Sigma|) = 1050$ sampled curve pairs gives the statistics shown in Table \ref{table:totalvalCompBenchmarkN5}.

\begin{table}
\centering
\caption{Statistics for approximate value computation for $\Nsamples = 5$. In parentheses we show the mean values averaged over a total of 1050 instances.}
\begin{tabular}[b]{l ccr}
\textbf{Approach} & \multicolumn{2}{c}{\textbf{Time}} &  \textbf{Black-Box Calls}  \\
\midrule
LMF & \multicolumn{2}{r}{148,032 ms} & 13,323,232 \\
& \multicolumn{2}{r}{(141.0 ms per instance)} & (12,688.8 per instance)  \\
\midrule 
Binary Search & \multicolumn{2}{r}{536,853 ms} & 45,909,628\\
& \multicolumn{2}{r}{(511.3 ms per instance)} &  (43,723.5 per instance) \\
\midrule
Lipschitz-only & \multicolumn{2}{r}{4,204,521 ms} & 820,468,224\\
& \multicolumn{2}{r}{(4,004.3 ms per instance)} &  (781,398.3 per instance) \\
\end{tabular}
\label{table:totalvalCompBenchmarkN5}
\end{table}

Since already for this example the Lipschitz-only approach is dominated by almost a factor of 30 by LMF (and by a factor of almost 8 by binary search), we perform more detailed analyses with $\Nsamples = 100$ only for LMF and binary search. The overall performance is given in Table~\ref{table:totalvalCompBenchmarkN100}. Also here LMF is more than 3 times faster than binary search. To give more insights into the relationship of their running times, we give a scatter plot of the running times of LMF and binary search on the same instances over the complete benchmark in Figure~\ref{fig:binary_lmf_scatter}, showing that binary search generally outperforms LMF only on instances which are comparably easy for LMF as well. The advantage of LMF becomes particularly clear on hard instances.

\begin{figure}
	\centering
	\includegraphics[width=0.8\textwidth]{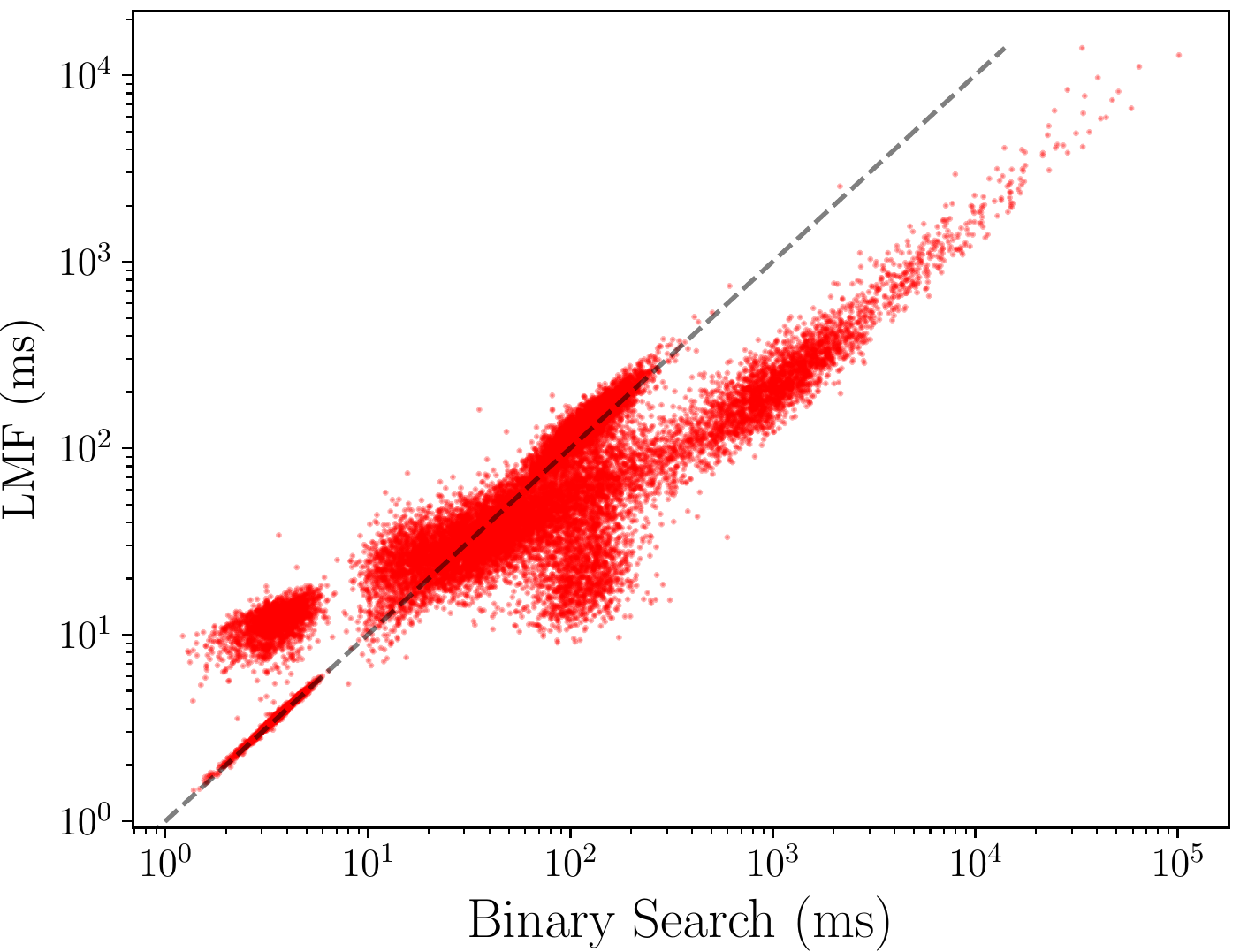}
	\caption{Running times of LMF and binary search on set of randomly sampled \characters curves.}
\label{fig:binary_lmf_scatter}
\end{figure}

Apart from these general statistics for our value computation benchmarks, we depict individual mean computation times and mean number of black-box calls (over all $\Nsamples$ samples) for each character pair $\alpha, \beta \in \Sigma$ in Figures~\ref{fig:valcomp_timesheatmap} and~\ref{fig:valcomp_bbcallsheatmap}. 

\begin{figure}
\includegraphics[width=0.5\textwidth]{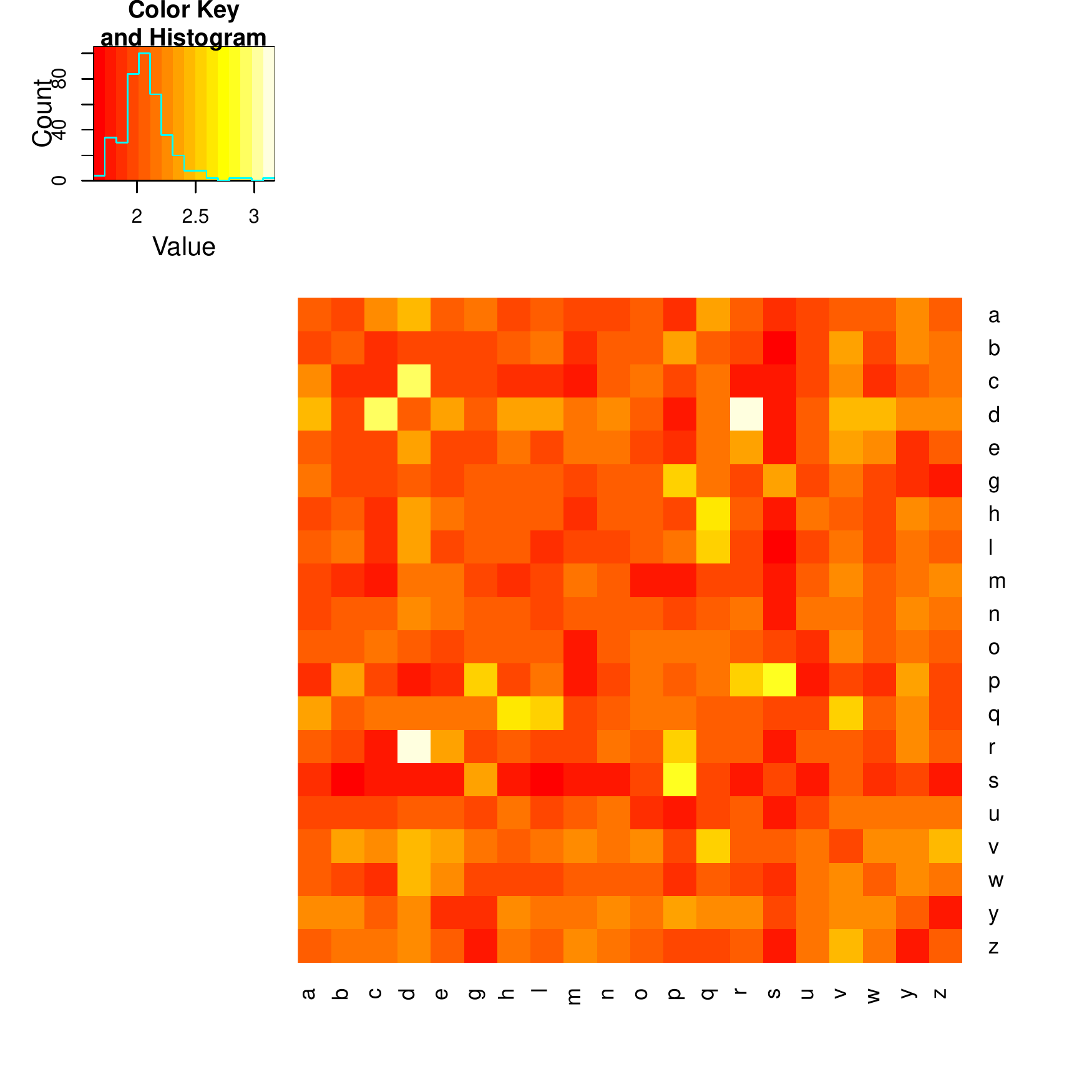}
\includegraphics[width=0.5\textwidth]{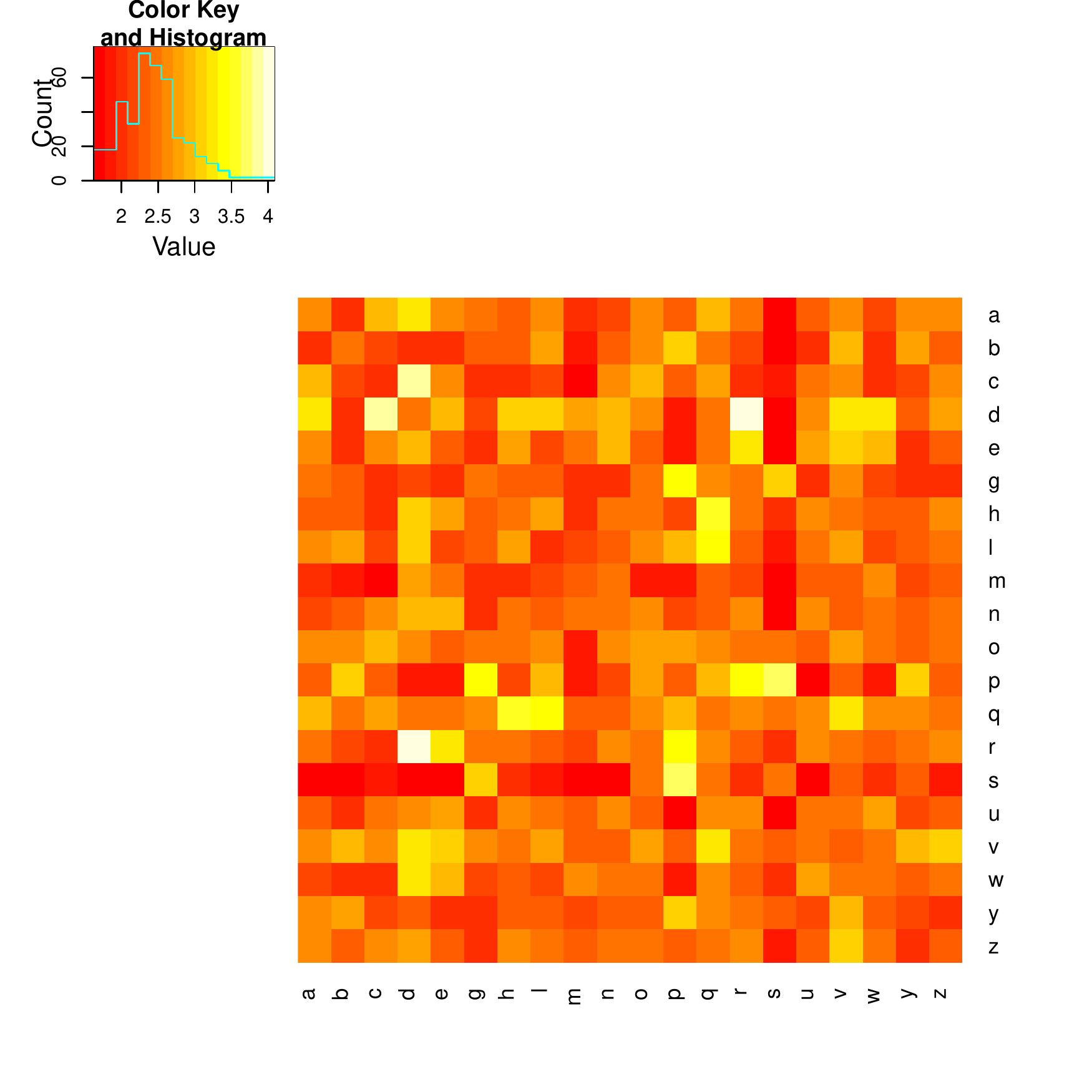}
\caption{Log of mean value computation time in ms for LMF (left) and Binary Search (right).}
\label{fig:valcomp_timesheatmap}
\end{figure}

\begin{figure}
\includegraphics[width=0.5\textwidth]{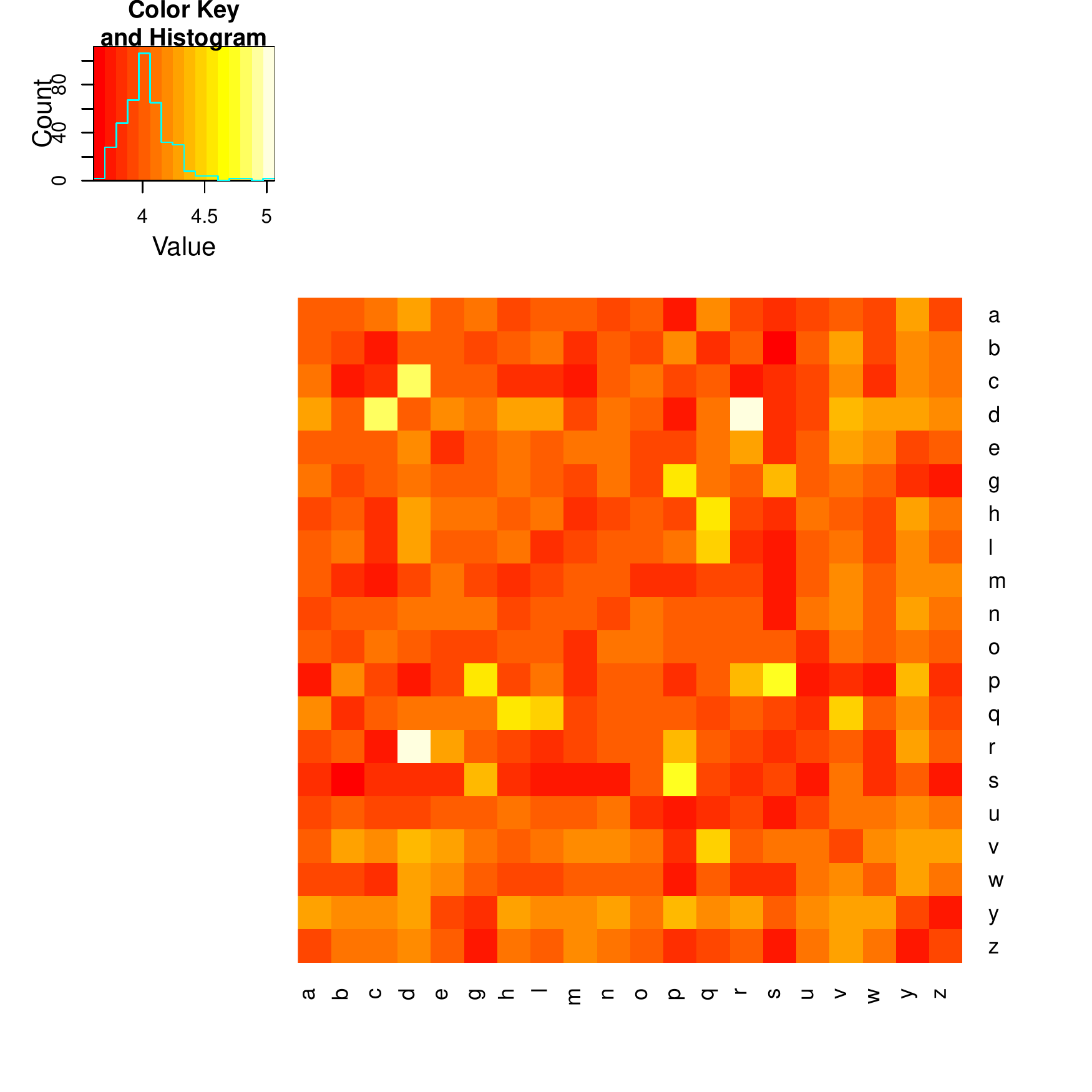}
\includegraphics[width=0.5\textwidth]{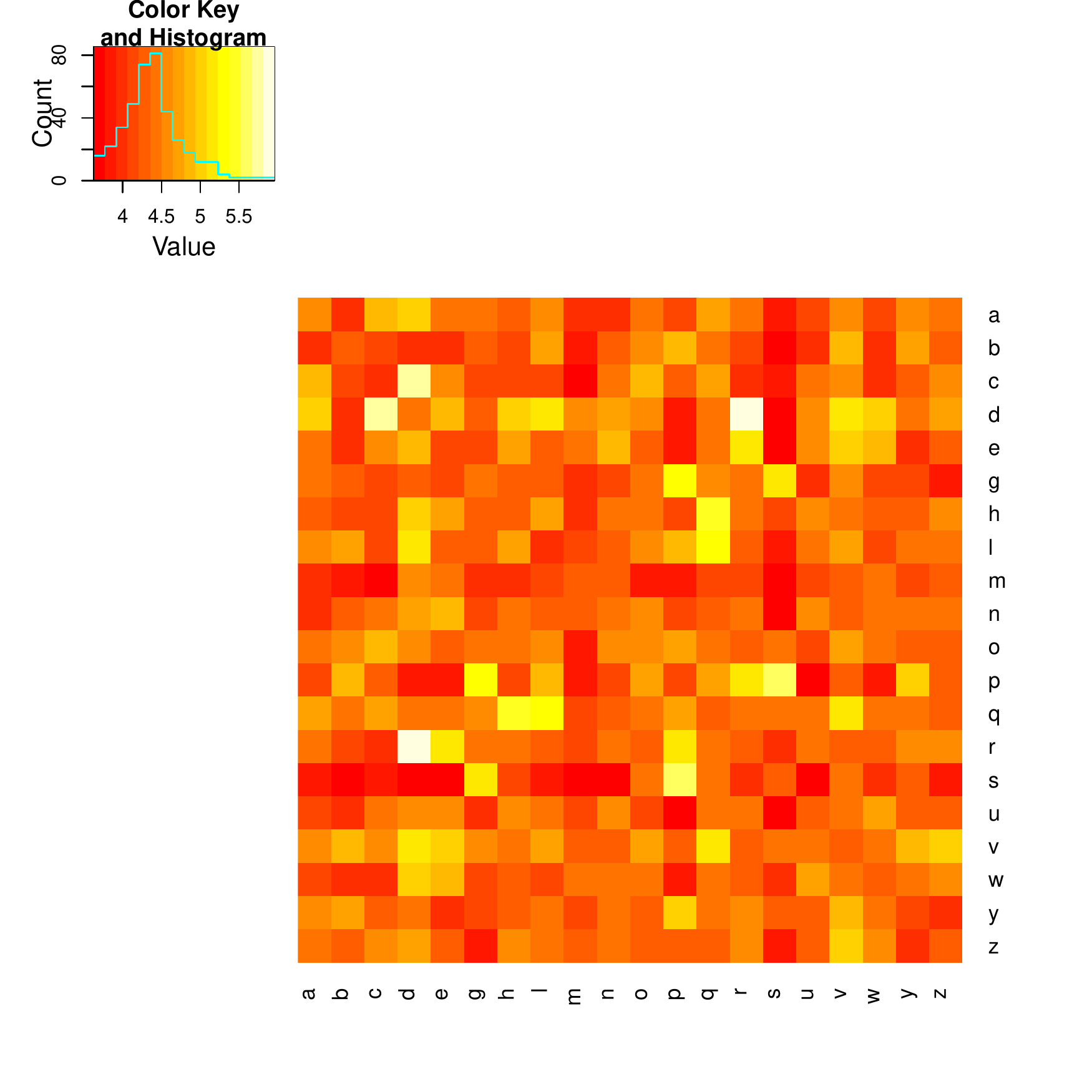}
\caption{Log of mean number of black-box calls for LMF (left) and Binary Search (right).}
\label{fig:valcomp_bbcallsheatmap}
\end{figure}

Finally, we give the average distance values on our benchmark set both under a fixed translation (specifically, with start points of $\pi$ and $\sigma$ normalized to the origin) and under translation in Figure~\ref{fig:valcomp_values}. Note that using naive approaches computing these tables would have been computationally extremely costly.

\begin{figure}
\includegraphics[width=0.5\textwidth]{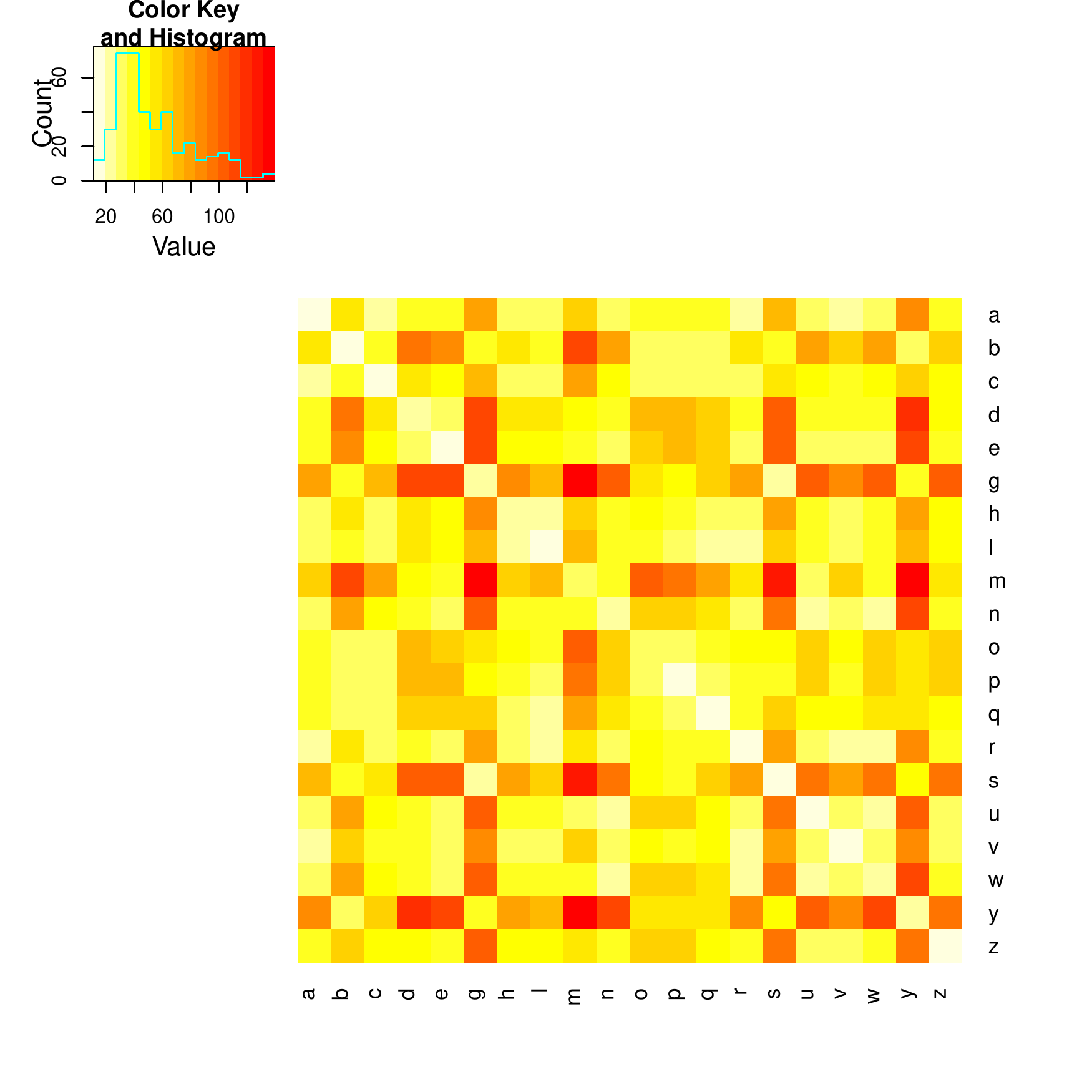}
\includegraphics[width=0.5\textwidth]{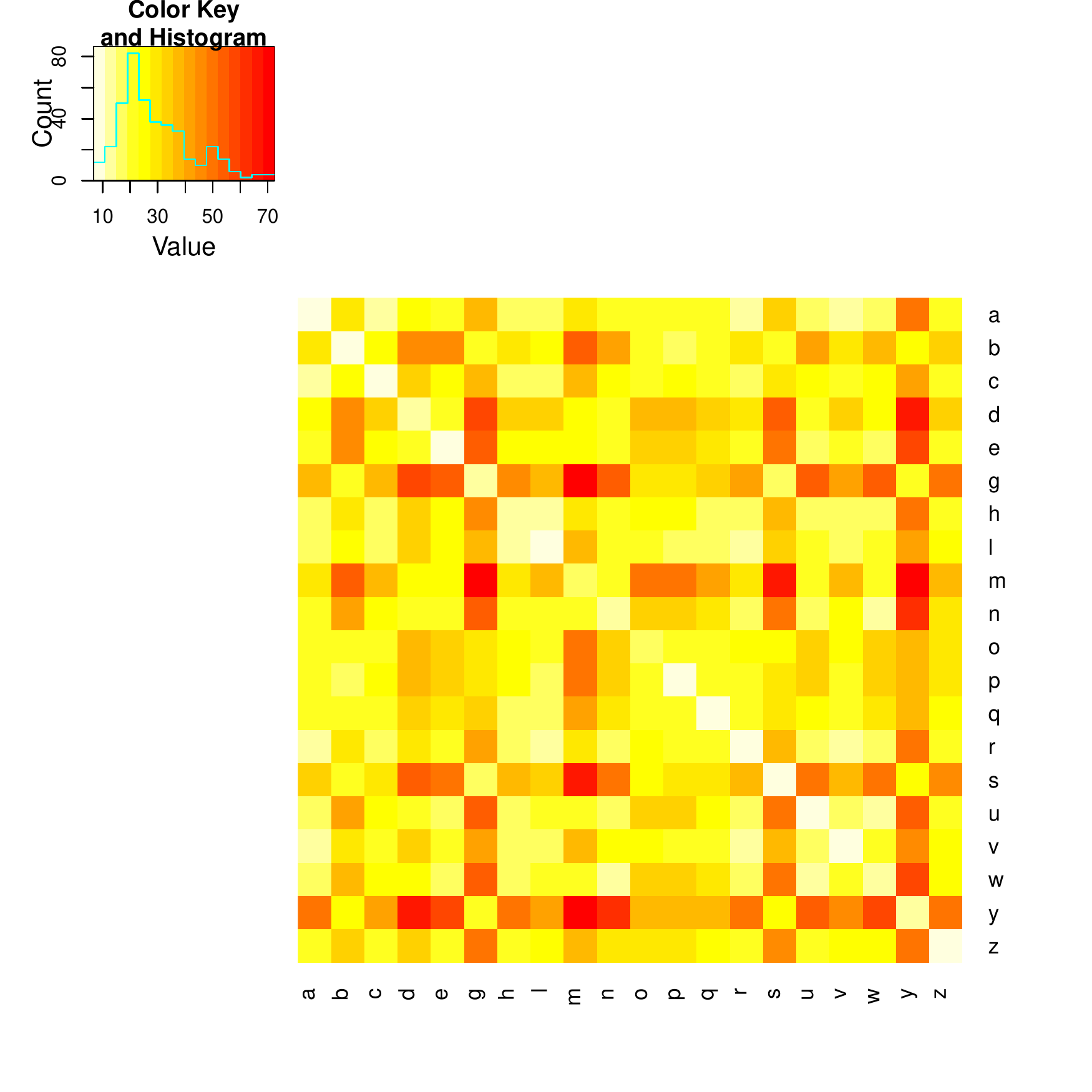}
\includegraphics[width=0.5\textwidth]{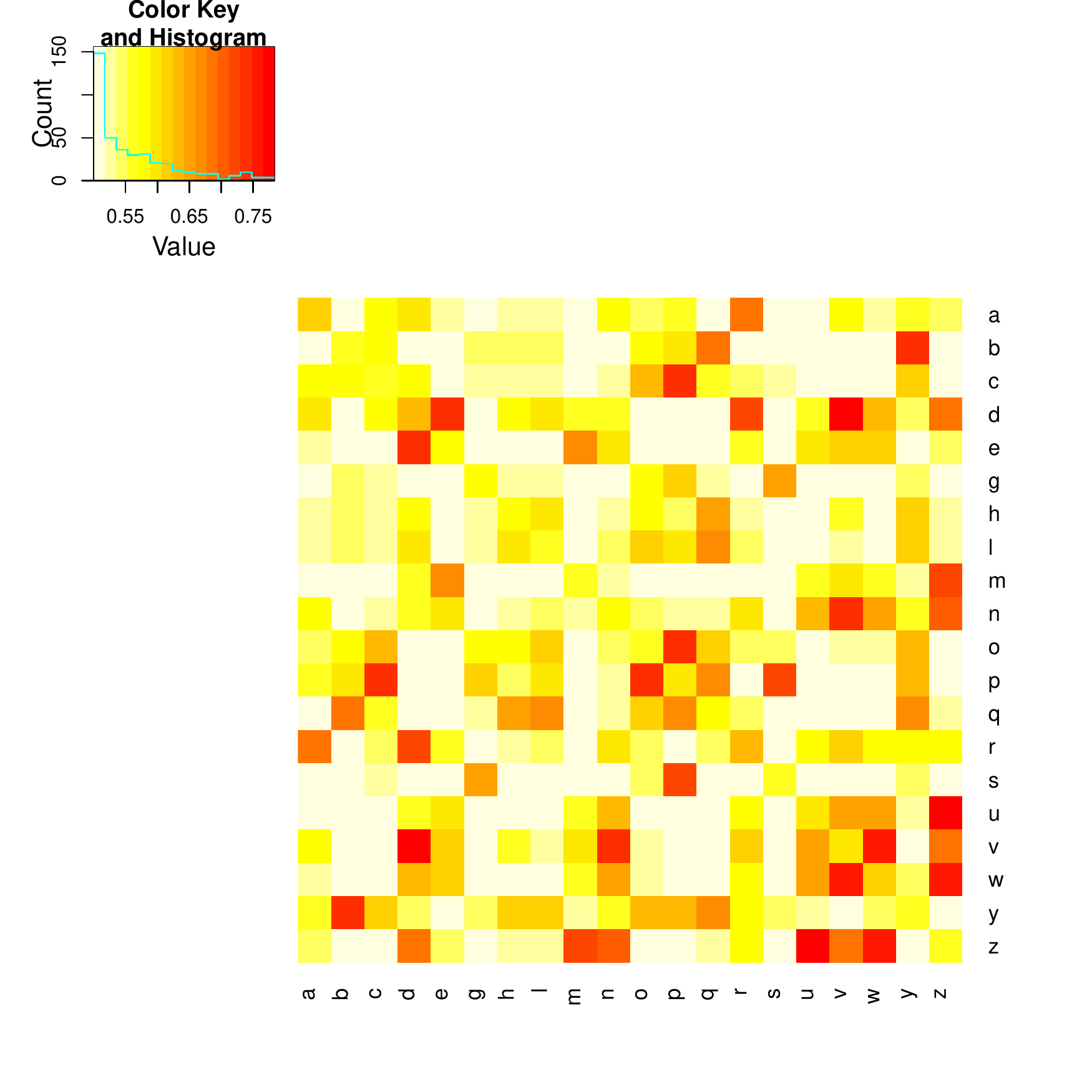}
\caption{Average Fréchet distance value (top left) and average Fréchet distance under translation value (top right), as well as the quotient of these values (bottom left).}
\label{fig:valcomp_values}
\end{figure}

\begin{table}
\caption{Statistics for approximate value computation for $\Nsamples = 100$. In parentheses, we give average values over the total of 21,000 curve pairs.}
\begin{tabular}{llrc}
\toprule
\textbf{Algorithm} & \multicolumn{2}{c}{\textbf{Time}} &  \textbf{Black-Box Calls}  \\
\midrule
LMF & \multicolumn{2}{c}{2,938,512 ms} & 260,128,449 \\
& \multicolumn{2}{c}{(140.0 ms per instance)} & (12,387.1 per instance)  \\
\cmidrule(r){2-3} 
& - Preprocessing & 71,728 ms  \\
\cmidrule(r){2-3}
& - Black-box calls (Lipschitz) & 400,189 ms  \\
\cmidrule(r){2-3}& - Arrangement estimation &166,479 ms  \\
\cmidrule(r){2-3}
& - Arrangement algorithm & 2,250,493 ms  \\
& \hphantom{bla} * Construction & 1,537,500 ms \\
& \hphantom{bla} * Black-box calls & 545,442 ms \\
\midrule 
Binary Search & \multicolumn{2}{c}{10,555,630 ms} & 875,424,988\\
& \multicolumn{2}{c}{( 502.7 ms per instance)} &  (41,686.9 per instance) \\
\bottomrule
 \end{tabular}
\label{table:totalvalCompBenchmarkN100}
\end{table}%

\section{Conclusion}
We engineer the first practical implementation for the discrete Fréchet distance under translation in the plane. While previous algorithmic solution for the problem solve it via expensive discrete methods, we introduce a new method from continuous optimization to achieve significant speed-ups on realistic inputs. 
  This is analogous to the success of integer programming solvers which, while optimizing a discrete problem, choose to work over the reals to gain access to linear programming relaxations, cutting planes methods, and more.   A novelty here is that we successfully apply  such methods to obtain drastic speed-ups for a \emph{polynomial-time problem}.

We leave as open problems to determine whether there are reasonable analogues of further ideas from integer programming, such as cutting plane methods or preconditioning, that could help to get further improved algorithms for our problem.
More generally, we believe that this gives an exciting direction for algorithm engineering in general that should find wider applications. 
A particular direction in this vein is the use of our methods to compute rotation- or scaling-invariant versions of the Fr\'echet distance. Intuitively, by introducing additional dimensions in our search space, our methods can in principle also be used to optimize over such additional degrees of freedom. However, the Lipschitz condition changes significantly,
and we leave it to future work to determine the applicability in these settings.


\bibliography{trunk/biblio}

\begin{thebibliography}{10}

\bibitem{sigspatial_dataset}
{ACM SIGSPATIAL GIS Cup 2017 Data Set}.
\newblock
  \url{https://www.martinwerner.de/datasets/san-francisco-shortest-path.html}.
\newblock Accessed: 2018-12-03.

\bibitem{characters_dataset}
{Character Trajectories Data Set}.
\newblock \url{https://archive.ics.uci.edu/ml/datasets/Character+Trajectories}.
\newblock Accessed: 2018-12-03.

\bibitem{AgarwalBAKS14}
Pankaj~K. Agarwal, Rinat Ben~Avraham, Haim Kaplan, and Micha Sharir.
\newblock Computing the discrete {Fr{\'{e}}chet} distance in subquadratic time.
\newblock {\em {SIAM} J. Comput.}, 43(2):429--449, 2014.
\newblock \href {https://doi.org/10.1137/130920526}
  {\path{doi:10.1137/130920526}}.

\bibitem{AltG95}
Helmut Alt and Michael Godau.
\newblock Computing the {Fr\'echet} distance between two polygonal curves.
\newblock {\em Internat. J. Comput. Geom. Appl.}, 5(1--2):78--99, 1995.

\bibitem{alt2001matching}
Helmut Alt, Christian Knauer, and Carola Wenk.
\newblock Matching polygonal curves with respect to the {Fréchet} distance.
\newblock In {\em Proc. 18th Annual Symposium on Theoretical Aspects of
  Computer Science (STACS'01)}, pages 63--74, 2001.

\bibitem{sigspatial1}
Julian Baldus and Karl Bringmann.
\newblock A fast implementation of near neighbors queries for {F}r\'{e}chet
  distance {(GIS Cup)}.
\newblock In {\em Proc.\ of the 25th ACM SIGSPATIAL International Conference on
  Advances in Geographic Information Systems (SIGSPATIAL 2017)}, pages
  99:1--99:4. ACM, 2017.
\newblock URL: \url{http://doi.acm.org/10.1145/3139958.3140062}, \href
  {https://doi.org/10.1145/3139958.3140062}
  {\path{doi:10.1145/3139958.3140062}}.

\bibitem{BenAvrahamKS15}
Rinat Ben~Avraham, Haim Kaplan, and Micha Sharir.
\newblock A faster algorithm for the discrete {F}réchet distance under
  translation.
\newblock {\em ArXiv preprint \url{http://arxiv.org/abs/1501.03724}}, 2015.

\bibitem{Bringmann14}
Karl Bringmann.
\newblock Why walking the dog takes time: {Fr\'echet} distance has no strongly
  subquadratic algorithms unless {SETH} fails.
\newblock In {\em Proc. 55th Ann. {IEEE} Symposium on Foundations of Computer
  Science (FOCS'14)}, pages 661--670, 2014.

\bibitem{BringmannKN19soda}
Karl Bringmann, Marvin K{\"{u}}nnemann, and Andr{\'{e}} Nusser.
\newblock Fr{\'{e}}chet distance under translation: Conditional hardness and an
  algorithm via offline dynamic grid reachability.
\newblock In {\em Proc.\ 30th Annual {ACM-SIAM} Symposium on Discrete
  Algorithms ({SODA} 2019)}, pages 2902--2921, 2019.
\newblock \href {https://doi.org/10.1137/1.9781611975482.180}
  {\path{doi:10.1137/1.9781611975482.180}}.

\bibitem{BringmannKN19socg}
Karl Bringmann, Marvin K{\"{u}}nnemann, and Andr{\'{e}} Nusser.
\newblock Walking the dog fast in practice: Algorithm engineering of the
  {F}r{\'{e}}chet distance.
\newblock In {\em Proc. 35th International Symposium on Computational Geometry
  (SoCG 2019)}, pages 17:1--17:21, 2019.
\newblock \href {https://doi.org/10.4230/LIPIcs.SoCG.2019.17}
  {\path{doi:10.4230/LIPIcs.SoCG.2019.17}}.

\bibitem{BuchinBMM17}
Kevin Buchin, Maike Buchin, Wouter Meulemans, and Wolfgang Mulzer.
\newblock Four {Soviets} walk the dog: Improved bounds for computing the
  {Fr{\'{e}}chet} distance.
\newblock {\em Discrete {\&} Computational Geometry}, 58(1):180--216, 2017.
\newblock \href {https://doi.org/10.1007/s00454-017-9878-7}
  {\path{doi:10.1007/s00454-017-9878-7}}.

\bibitem{BuchinBW09}
Kevin Buchin, Maike Buchin, and Yusu Wang.
\newblock Exact algorithms for partial curve matching via the {Fr\'echet}
  distance.
\newblock In {\em Proc. 20th Annu. ACM-SIAM Symp. Discrete Algorithms
  (SODA'09)}, pages 645--654, 2009.

\bibitem{sigspatial2}
Kevin Buchin, Yago Diez, Tom van Diggelen, and Wouter Meulemans.
\newblock Efficient trajectory queries under the {F}r\'{e}chet distance {(GIS
  Cup)}.
\newblock In {\em Proc.\ of the 25th ACM SIGSPATIAL International Conference on
  Advances in Geographic Information Systems (SIGSPATIAL 2017)}, pages
  101:1--101:4. ACM, 2017.
\newblock URL: \url{http://doi.acm.org/10.1145/3139958.3140064}, \href
  {https://doi.org/10.1145/3139958.3140064}
  {\path{doi:10.1145/3139958.3140064}}.

\bibitem{BuchinDLN19}
Kevin Buchin, Anne Driemel, Natasja van~de L'Isle, and Andr{\'{e}} Nusser.
\newblock klcluster: Center-based clustering of trajectories.
\newblock In {\em Proc.\ of the 27th {ACM} {SIGSPATIAL} International
  Conference on Advances in Geographic Information Systems ({SIGSPATIAL}
  2019)}, pages 496--499. {ACM}, 2019.
\newblock \href {https://doi.org/10.1145/3347146.3359111}
  {\path{doi:10.1145/3347146.3359111}}.

\bibitem{BuchinOS19}
Kevin Buchin, Tim Ophelders, and Bettina Speckmann.
\newblock {SETH} says: Weak {F}r{\'{e}}chet distance is faster, but only if it
  is continuous and in one dimension.
\newblock In Timothy~M. Chan, editor, {\em Proceedings of the Thirtieth Annual
  {ACM-SIAM} Symposium on Discrete Algorithms, {SODA} 2019, San Diego,
  California, USA, January 6-9, 2019}, pages 2887--2901. {SIAM}, 2019.
\newblock \href {https://doi.org/10.1137/1.9781611975482.179}
  {\path{doi:10.1137/1.9781611975482.179}}.

\bibitem{CeccarelloDS19}
Matteo Ceccarello, Anne Driemel, and Francesco Silvestri.
\newblock {FRESH:} {F}r{\'{e}}chet similarity with hashing.
\newblock In {\em Proc.\ of the 16th International Symposium on Algorithms and
  Data Structures ({WADS} 2019)}, volume 11646 of {\em LNCS}, pages 254--268.
  Springer, 2019.
\newblock \href {https://doi.org/10.1007/978-3-030-24766-9\_19}
  {\path{doi:10.1007/978-3-030-24766-9\_19}}.

\bibitem{DriemelHPW12}
Anne Driemel, Sariel Har-Peled, and Carola Wenk.
\newblock Approximating the {Fréchet} distance for realistic curves in near
  linear time.
\newblock {\em Discrete {\&} Computational Geometry}, 48(1):94--127, Jul 2012.
\newblock \href {https://doi.org/10.1007/s00454-012-9402-z}
  {\path{doi:10.1007/s00454-012-9402-z}}.

\bibitem{sigspatial3}
Fabian Dütsch and Jan Vahrenhold.
\newblock A filter-and-refinement-algorithm for range queries based on the
  {F}r\'{e}chet distance ({GIS} {Cup}).
\newblock In {\em Proc.\ of the 25th ACM SIGSPATIAL International Conference on
  Advances in Geographic Information Systems (SIGSPATIAL 2017)}, pages
  100:1--100:4. ACM, 2017.
\newblock URL: \url{http://doi.acm.org/10.1145/3139958.3140063}, \href
  {https://doi.org/10.1145/3139958.3140063}
  {\path{doi:10.1145/3139958.3140063}}.

\bibitem{EiterM94}
Thomas Eiter and Heikki Mannila.
\newblock Computing discrete {Fr\'echet} distance.
\newblock Technical Report CD-TR 94/64, Christian Doppler Laboratory for Expert
  Systems, TU Vienna, Austria, 1994.

\bibitem{Galperin85}
Efim~A. Galperin.
\newblock The cubic algorithm.
\newblock {\em Journal of Mathematical Analysis and Applications}, 112(2):635
  -- 640, 1985.
\newblock URL:
  \url{http://www.sciencedirect.com/science/article/pii/0022247X85902689},
  \href {https://doi.org/https://doi.org/10.1016/0022-247X(85)90268-9}
  {\path{doi:https://doi.org/10.1016/0022-247X(85)90268-9}}.

\bibitem{Galperin87}
Efim~A. Galperin.
\newblock Two alternatives for the cubic algorithm.
\newblock {\em Journal of Mathematical Analysis and Applications}, 126(1):229
  -- 237, 1987.
\newblock URL:
  \url{http://www.sciencedirect.com/science/article/pii/0022247X87900886},
  \href {https://doi.org/https://doi.org/10.1016/0022-247X(87)90088-6}
  {\path{doi:https://doi.org/10.1016/0022-247X(87)90088-6}}.

\bibitem{Galperin88}
Efim~A. Galperin.
\newblock Precision, complexity, and computational schemes of the cubic
  algorithm.
\newblock {\em Journal of Optimization Theory and Applications}, 57(2):223 --
  238, 1988.
\newblock \href {https://doi.org/https://doi.org/10.1007/BF00938537}
  {\path{doi:https://doi.org/10.1007/BF00938537}}.

\bibitem{Galperin93}
Efim~A. Galperin.
\newblock The fast cubic algorithm.
\newblock {\em Computers \& Mathematics with Applications}, 25(10):147 -- 160,
  1993.
\newblock URL:
  \url{http://www.sciencedirect.com/science/article/pii/0898122193902898},
  \href {https://doi.org/https://doi.org/10.1016/0898-1221(93)90289-8}
  {\path{doi:https://doi.org/10.1016/0898-1221(93)90289-8}}.

\bibitem{GourdinHJ94}
E.~Gourdin, P.~Hansen, and B.~Jaumard.
\newblock Global optimization of multivariate lipschitz functions: Survey and
  computational comparison, 1994.

\bibitem{HansenJ95chapter}
Pierre Hansen and Brigitte Jaumard.
\newblock Lipschitz optimization.
\newblock In Reiner Horst and Panos~M. Pardalos, editors, {\em Handbook of
  Global Optimization}, pages 407--493. Springer US, Boston, MA, 1995.
\newblock \href {https://doi.org/10.1007/978-1-4615-2025-2_9}
  {\path{doi:10.1007/978-1-4615-2025-2_9}}.

\bibitem{Horst86}
Reiner Horst.
\newblock A general class of branch-and-bound methods in global optimization
  with some new approaches for concave minimization.
\newblock {\em Journal of Optimization Theory and Applications}, 51:271 -- 291,
  1986.
\newblock \href {https://doi.org/https://doi.org/10.1007/BF00939825}
  {\path{doi:https://doi.org/10.1007/BF00939825}}.

\bibitem{HorstT87}
Reiner Horst and Hoang Tuy.
\newblock On the convergence of global methods in multiextremal optimization.
\newblock {\em Journal of Optimization Theory and Applications}, 54:253 -- 271,
  1987.
\newblock \href {https://doi.org/https://doi.org/10.1007/BF00939434}
  {\path{doi:https://doi.org/10.1007/BF00939434}}.

\bibitem{HorstT96}
Reiner Horst and Hoang Tuy.
\newblock {\em Global Optimization -- Deterministic Approaches}.
\newblock Springer Berlin Heidelberg, 3rd edition, 1996.

\bibitem{JiangXZ08}
Minghui Jiang, Ying Xu, and Binhai Zhu.
\newblock Protein structure--structure alignment with discrete {Fréchet}
  distance.
\newblock {\em J. Bioinformatics and Computational Biology}, 6(01):51--64,
  2008.

\bibitem{Megiddo83}
Nimrod Megiddo.
\newblock Applying parallel computation algorithms in the design of serial
  algorithms.
\newblock {\em Journal of the {ACM}}, 30(4):852--865, 1983.

\bibitem{MOSIG2005113}
Axel Mosig and Michael Clausen.
\newblock Approximately matching polygonal curves with respect to the fréchet
  distance.
\newblock {\em Computational Geometry}, 30(2):113 -- 127, 2005.
\newblock Special Issue on the 19th European Workshop on Computational
  Geometry.
\newblock URL:
  \url{http://www.sciencedirect.com/science/article/pii/S0925772104000781},
  \href {https://doi.org/https://doi.org/10.1016/j.comgeo.2004.05.004}
  {\path{doi:https://doi.org/10.1016/j.comgeo.2004.05.004}}.

\bibitem{Piyavskii72}
S.A. Piyavskii.
\newblock An algorithm for finding the absolute extremum of a function.
\newblock {\em USSR Computational Mathematics and Mathematical Physics},
  12(4):57 -- 67, 1972.
\newblock URL:
  \url{http://www.sciencedirect.com/science/article/pii/0041555372901152},
  \href {https://doi.org/https://doi.org/10.1016/0041-5553(72)90115-2}
  {\path{doi:https://doi.org/10.1016/0041-5553(72)90115-2}}.

\bibitem{cgalarrangement}
Ron Wein, Eric Berberich, Efi Fogel, Dan Halperin, Michael Hemmer, Oren
  Salzman, and Baruch Zukerman.
\newblock {2D} arrangements.
\newblock In {\em {CGAL} User and Reference Manual}. {CGAL Editorial Board},
  {5.0.2} edition, 2020.
\newblock URL:
  \url{https://doc.cgal.org/5.0.2/Manual/packages.html#PkgArrangementOnSurface2}.

\bibitem{wenk2002phd}
Carola Wenk.
\newblock {\em Shape matching in higher dimensions}.
\newblock PhD thesis, Freie Universit{\"a}t Berlin, 2002.
\newblock PhD Thesis.

\bibitem{giscup17}
Martin Werner and Dev Oliver.
\newblock {ACM} {SIGSPATIAL} {GIS} cup 2017: Range queries under
  {F}r{\'{e}}chet distance.
\newblock {\em {SIGSPATIAL} Special}, 10(1):24--27, 2018.

\bibitem{Wolsey98}
L.A. Wolsey.
\newblock {\em Integer Programming}.
\newblock Wiley Series in Discrete Mathematics and Optimization. Wiley, 1998.

\end{thebibliography}

\end{document}